\newcommand{\ds}{\displaystyle}
\newcommand{\E}{\mathbb{E}}
\newcommand{\Var}{\text{Var}}
\newcommand{\Cov}{\text{Cov}}
\newcommand{\X}{\mathcal{X}}
\newcommand\numberthis{\addtocounter{equation}{1}\tag{\theequation}}
\newtheorem{theorem}{Theorem}
\newtheorem{lemma}{Lemma}
\theoremstyle{remark}
\newtheorem{ass}{Assumption}
\newtheorem{remark}{Remark}
\title{Globally-centered autocovariances in MCMC}
\author{
  Medha Agarwal \\
  Dept. of Mathematics and Statistics\\
  IIT Kanpur\\
  \texttt{medhaaga@iitk.ac.in} \and
 Dootika Vats\footnote{Corresponding author}\\
  Dept.of Mathematics and Statistics\\
  IIT Kanpur\\
  \texttt{dootika@iitk.ac.in}
}
\begin{document}

\maketitle

\doublespacing
\begin{abstract}
Autocovariances are a fundamental quantity of interest in Markov chain Monte Carlo (MCMC) simulations with autocorrelation function (ACF) plots being an integral visualization tool for performance assessment. Unfortunately, for slow-mixing Markov chains, the empirical autocovariance can highly underestimate the truth. For multiple-chain MCMC sampling,  we propose a globally-centered estimator of the autocovariance function (G-ACvF) that exhibits significant theoretical and empirical improvements. We show that the bias of the G-ACvF estimator is smaller than the bias of the current state-of-the-art.  The impact of this improved estimator is evident in three critical output analysis applications: (1) ACF plots, (2) estimates of the Monte Carlo asymptotic covariance matrix, and (3) estimates of the effective sample size. Under weak conditions, we establish strong consistency of our improved asymptotic covariance estimator, and obtain its large-sample bias and variance.  The performance of the new estimators is demonstrated through various examples.
\end{abstract}

\section{Introduction} \label{sec:intro}

Advancements in modern personal computing have made it easy to run parallel Markov chain Monte Carlo (MCMC) implementations. This is particularly useful for slow-mixing Markov chains where the starting points of the chains are spread over the state-space in order to more accurately capture characteristics of the target distribution. The output from $m$ parallel chains is then summarized, visually and quantitatively, to assess the empirical mixing properties of the chains and the quality of Monte Carlo estimators.

A key quantity that drives MCMC output analysis is the autocovariance function (ACvF).
 Estimators of ACvF are only available for single-chain implementations, with a parallel-chain version obtained by naive averaging. As we will demonstrate, this defeats the purpose of running parallel Markov chains from dispersed starting values.

Let $F$ be the target distribution with mean $\mu$, defined on $\X \subseteq \mathbb{R}^d$, equipped with a countably generated $\sigma$-field, $\mathcal{B}(\X)$. For $s = 1, \dots, m$, let $\{X_{s,t}\}_{t\geq1}$ be the $s${th} Harris ergodic $F$-invariant Markov chain \citep[see][for definitions]{meyn:twee:2009} employed to learn characteristics about $F$. The process is covariance stationary with the lag $k$ ACvF defined as 
\[
    \Gamma(k) = \Cov_F(X_{s,1}, X_{s,1+k})= \mathbb{E}_F \left[(X_{s,1} - \mu)(X_{s,1+k} - \mu)^T \right]\,.
\]
Estimating $\Gamma(k)$ is critical to assessing the quality of the sampler and the reliability of Monte Carlo estimators. Let $\bar{X}_s = n^{-1} \sum_{t=1}^{n} X_{s,t}$ denote the Monte Carlo estimator of $\mu$ from the $s$th chain. The standard estimator for $\Gamma(k)$ is the sample autocovariance matrix at lag $k \geq 0$:
\begin{equation} \label{eq:empirical_ACvF}
    \hat{\Gamma}_s(k) = \dfrac{1}{n}\sum_{t=1}^{n-k} \left(X_{s,t} - \bar{X}_s \right) \left(X_{s, t + k} - \bar{X}_s \right)^T\,,
\end{equation}
and for $k < 0$, $\hat{\Gamma}_s(k) = \hat{\Gamma}_s(-k)^T$. For a single-chain MCMC run, the estimator $\hat{\Gamma}_s(k)$ is used to construct ACF plots, to estimate the long-run variance of Monte Carlo estimators \citep{hannan:1970,dame:1991}, and to estimate effective sample size (ESS) \citep{kass:carlin:gelman:neal:1998,gong:fleg:2016,vats:fleg:jon:2019}. However, there is no unified approach to constructing estimators of $\Gamma(k)$ for parallel-chain implementations. Slow-mixing chains take time to traverse the space so that $\bar{X}_s$ over  all $s$, can be vastly different. Consequently, for Markov chains with positive autocorrelations, $\hat{\Gamma}_s(k)$ typically underestimates $\Gamma(k)$, leading to a false sense of security about the quality of the process.

 We propose a globally-centered estimator of ACvF (G-ACvF) that centers the Markov chains around the global mean from all $m$ chains. We show that the bias for G-ACvF is lower than $\hat{\Gamma}_s(k)$, and through various examples, demonstrate improved estimation. We employ the G-ACvF estimators to construct ACF plots and a demonstrative example is at the end of this section.

 
Estimators of ACvFs are used to estimate the long-run variance of Monte Carlo averages. Specifically, spectral variance (SV) estimators  are used to estimate $\Sigma = \sum_{k=-\infty}^{\infty} \Gamma(k)$ \citep{andr:1991,dame:1991,fleg:jone:2010}. We replace $\hat{\Gamma}_s(k)$ with G-ACvF in SV estimators to obtain a globally-centered SV (G-SV) estimator and demonstrate strong consistency under weak conditions. In the spirit of \cite{andr:1991}, we also obtain large-sample bias and variance of the resulting estimator. SV estimators can be prohibitively slow to  compute \citep{liu:fleg:2018}. To relieve the computational burden, we adapt the fast algorithm of \cite{heberle2017fast} to dramatically reduce computation time. The G-SV estimator  is employed in the computation of ESS. We will show that using the G-SV estimator for estimating $\Sigma$ safeguards users against early termination of the MCMC process.

\subsection{Demonstrative example} 
\label{sub:demonstrative_example}


We use ACF plots to demonstrate the striking difference in the estimation of $\Gamma(k)$. Consider a random-walk Metropolis-Hastings sampler for a univariate mixture of Gaussians. Let
\[
f(x) = 0.7\,f(x; -5, 1) + 0.3\,f(x; 5, 0.5)\,,
\]
be the target density where $f(x; a,b)$ is the density of a normal distribution with mean $a$ and variance $b$. 
We set $m = 2$ with starting values distributed to the two modes. The trace plots in Figure~\ref{fig:gaussian-trace} indicate that in the first $10^4$ steps the chains do not jump modes so that both Markov chains yield significantly different estimates of the population mean, $\mu$. At $10^5$ sample size, both Markov chains have traversed the state space reasonably and yield similar estimates of $\mu$. Further, in Figure~\ref{fig:gaussian-trace}, for $n = 10^4$ we present the ACF plots using $\hat{\Gamma}_s(k)$ and our proposed G-ACvF estimator. The blue curves are the respective estimates at $n = 10^5$. At $n = 10^5$ when the chains have similar means, the G-ACF and locally-centered ACF are equivalent.  However, for $n = 10^4$, $\hat{\Gamma}_s(k)$ critically underestimates the correlation, producing a misleading visual of the quality of the Markov chain.  G-ACF  accounts for the discrepancy in sample means between the two chains, leading to a far improved quality of estimation.

\begin{figure}[htbp]
\centering
   \includegraphics[width=.62\linewidth]{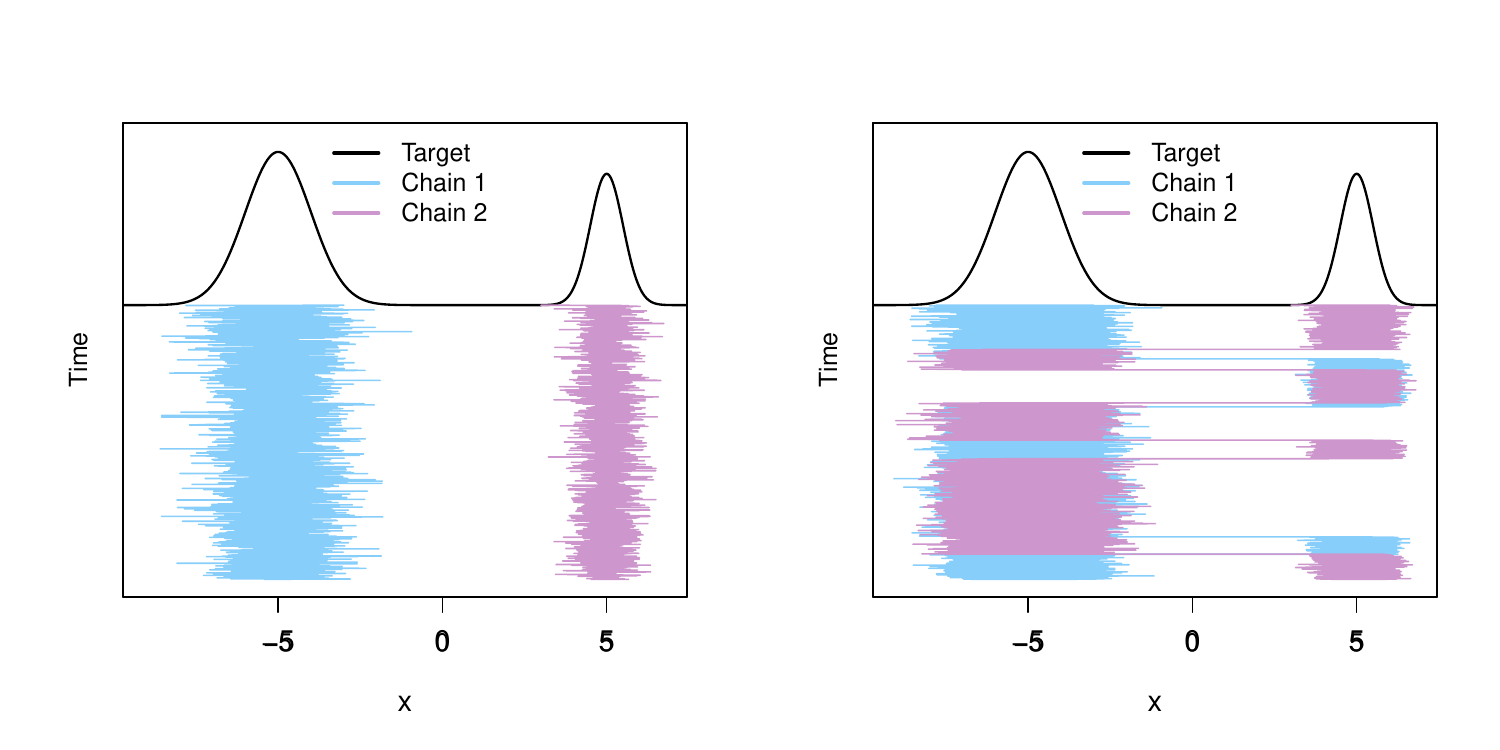}\\
\vspace{-.5cm}
    \includegraphics[width=.62\textwidth]{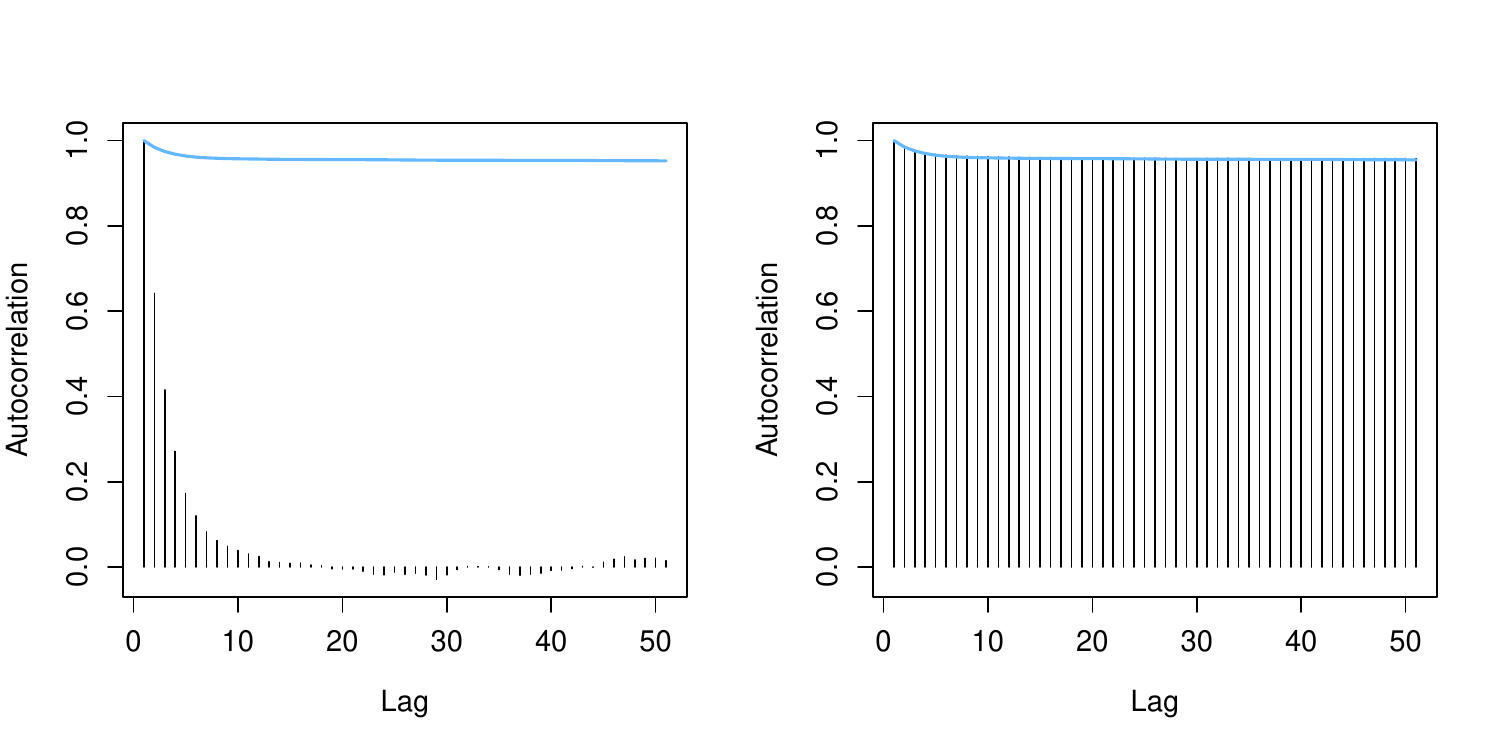} 
    \caption{Top: Target density and trace plots for two chains at $n = 10^4$ (left) and $n = 10^5$ (right). Bottom: ACF plots for the first chain using local-centering (left) and global-centering (right). Histogram estimates at $n = 10^4$ with the blue curve being estimates from $n = 10^5$.}
    \label{fig:gaussian-trace}
\end{figure}



\section{Globally-centered autocovariance} \label{sec:G-ACF}

Let $P$ denote the Markov transition kernel that uniquely determines the ACvF under stationarity. Recall that the sample mean of the $s${th} Markov chain is $\bar{X}_s$ and denote the global mean vector by $\bar{\bar{X}} = m^{-1}\sum_{s = 1}^{m}\bar{X}_s$. For $k \geq 0$, we define the globally-centered ACvF (G-ACvF) estimator for the $s${th} Markov chain as
\begin{equation} \label{eq:G-ACvF}
    \hat{\Gamma}_{G,s}(k) = \dfrac{1}{n} \sum_{t=1}^{n-k}(X_{s,t}-\bar{\bar{X}})(X_{s,t+k}-\bar{\bar{X}})^T,
\end{equation}
with $\hat{\Gamma}_{G,s}(k) = \hat{\Gamma}_{G,s}(-k)^T$ for $k < 0$. In the event that all $m$ Markov chains have been run long enough, $\bar{X}_s \approx \bar{\bar{X}}$, and hence  $\hat{\Gamma}_{s} (k) \approx \hat{\Gamma}_{G,s}(k)$. However, for shorter runs or for slow-mixing chains, $\Gamma(k)$ is more appropriately estimated by $\hat{\Gamma}_{G,s}$ as it utilizes information from all chains and accounts for disparity between estimates of $\mu$. For $q \geq 1$, let 
\begin{equation*}
\Phi^{(q)} = \sum_{ k= -\infty}^{\infty}\abs{k}^q \Gamma(k)\,,  
\end{equation*}
and let $\Phi^{(1)}$ be denoted by $\Phi$. Let $\|\cdot\|$ denote Euclidean norm. The proof of the following theorem is in the supplement and is similar to that of $\hat{\Gamma}_{s}(k)$ \citep{priestley1981spectral}.

\begin{theorem} \label{th:G-ACF_bias} Let $\E_F \|X_{1,1}\|^{2 + \delta} < \infty$ for some $\delta > 0$. If $P$ is polynomially ergodic of order $\xi > (2 + \epsilon)/(1 + 2/\delta)$ for some $\epsilon > 0$,  then,
\[
   \mathbb{E}_F\left[\hat{\Gamma}_{G,s}(k) \right] = \left(1- \dfrac{|k|}{n}\right) \left(\Gamma(k) - \dfrac{\Sigma}{mn} - \dfrac{\Phi}{mn^2}\right)  + o \left(n^{-1} \right)\,.
\]
\end{theorem}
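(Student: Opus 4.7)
The plan is to work with the centered, stationary version of the process. Set $Y_{st} = X_{st}-\mu$ and $\bar Y = \bar{\bar X}-\mu$. Since $\hat{\Gamma}_{G,s}(k)$ is translation invariant I would substitute into the definition and expand
\[
\hat{\Gamma}_{G,s}(k) = \frac{1}{n}\sum_{t=1}^{n-k}\Bigl[Y_{st}Y_{s,t+k}^T - Y_{st}\bar Y^T - \bar Y Y_{s,t+k}^T + \bar Y \bar Y^T\Bigr],
\]
and compute the expectation of each of the four summands. Independence of the $m$ chains together with $\E[Y_{s',t'}] = 0$ kills every $s'\neq s$ cross-product, and covariance stationarity within each chain then gives
\[
\E[Y_{st}\bar Y^T] = \frac{1}{mn}\sum_{t'=1}^n\Gamma(t'-t), \qquad \E[\bar Y \bar Y^T] = \frac{S_n}{mn^2}, \qquad S_n := \sum_{t,t'=1}^n\Gamma(t'-t).
\]
Summing over $t\in\{1,\ldots,n-k\}$ collects the four contributions into
\[
\E[\hat{\Gamma}_{G,s}(k)] = \Bigl(1-\tfrac{k}{n}\Bigr)\Gamma(k) - \frac{A_k+B_k}{mn^2} + \Bigl(1-\tfrac{k}{n}\Bigr)\frac{S_n}{mn^2},
\]
with $A_k = \sum_{t=1}^{n-k}\sum_{t'=1}^n\Gamma(t'-t)$ and $B_k = \sum_{t=1}^{n-k}\sum_{t'=1}^n\Gamma(t+k-t')$.

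The next step is to convert these finite double sums into expressions in $\Sigma$ and $\Phi$. The elementary identity $S_n = n\Sigma_n - \Phi_n$, where $\Sigma_n = \sum_{|j|<n}\Gamma(j)$ and $\Phi_n = \sum_{|j|<n}|j|\,\Gamma(j)$, handles the last term. Changing variable $j=t'-t$ in $A_k$ and $\ell = t+k-t'$ in $B_k$, then comparing the counting multiplicities to those for $S_n$, exhibits each of $A_k$ and $B_k$ as $S_n$ minus a finite collection of at most $O(k)$ individual values of $\Gamma$; for fixed $k$ these boundary corrections, divided by $mn^2$, are $O(k/n^2) = o(n^{-1})$ and fall into the remainder. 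Substituting $S_n = n\Sigma_n - \Phi_n$ in the places it still appears and collecting the terms then yields the stated expansion, with $|k|$ replacing $k$ for $k<0$ by the transpose convention $\hat{\Gamma}_{G,s}(k) = \hat{\Gamma}_{G,s}(-k)^T$.

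The main obstacle is the passage from the truncated quantities $\Sigma_n,\Phi_n$ to the limits $\Sigma,\Phi$ at the required order; specifically one needs $\Sigma - \Sigma_n = o(n^{-1})$ and $\Phi_n \to \Phi$. This is where the hypotheses on moments and polynomial ergodicity enter. Under $\E_F\|X_{11}\|^{2+\delta}<\infty$ and polynomial ergodicity of order $\xi > (2+\epsilon)/(1+2/\delta)$, standard covariance-bound arguments for polynomially ergodic Markov chains give the absolute summability $\sum_k |k|\,\|\Gamma(k)\| < \infty$, from which dominated convergence yields $\sum_{|j|\geq n}\|\Gamma(j)\| = o(n^{-1})$ and $\sum_{|j|\geq n}|j|\,\|\Gamma(j)\| \to 0$, which are precisely the tail estimates the bias computation demands. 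A subsidiary point, should the chains not actually be started from $F$, is that each $\E[Y_{st}Y_{s,t+k}^T]$ differs from $\Gamma(k)$ by a term that decays polynomially in $t$ under the same hypotheses, so the accumulated non-stationary correction, averaged over $t\leq n-k$, is also absorbed in $o(n^{-1})$.
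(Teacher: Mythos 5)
Your proposal is correct and, in substance, it is the same bias computation as the paper's, though organized along a different decomposition. The paper expands $\hat{\Gamma}_{G,s}(k)$ around the chain mean $\bar{X}_s$, writing it as $\hat{\Gamma}_s(k)$ plus cross terms $(X_{st}-\bar{X}_s)(\bar{X}_s-\bar{\bar{X}})^T$ plus the quadratic term $(\bar{X}_s-\bar{\bar{X}})(\bar{X}_s-\bar{\bar{X}})^T$, and then imports two known results: Priestley's formula $\E[\hat{\Gamma}_s(k)]=(1-\abs{k}/n)\left(\Gamma(k)-\Var(\bar{X}_s)\right)$ and the Song--Schmeiser expansion of $\Var(\bar{X}_s)$; independence across chains and within-chain stationarity do the rest. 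You instead expand directly around the true mean $\mu$, reduce everything to the double sums $S_n$, $A_k$, $B_k$, and derive the needed expansions yourself from the summability $\sum_k \abs{k}\,\|\Gamma(k)\|<\infty$ --- which is precisely what the moment and polynomial-ergodicity hypotheses are invoked for in the paper as well. Your route is self-contained and makes the role of $\Phi<\infty$ explicit, at the cost of more bookkeeping; the paper's route is shorter because the two cited results absorb that bookkeeping. Two minor points: describing $S_n-A_k$ as ``at most $O(k)$ individual values of $\Gamma$'' is loose --- it consists of $O(kn)$ terms whose total norm is $O(k)$ under summability, which is what your $O(k/n^2)$ bound actually uses; and your expansion, via $S_n=n\Sigma_n-\Phi_n$, naturally yields $-\Sigma_n/(mn)+\Phi_n/(mn^2)$, whose $\Phi$-term differs in sign from the one displayed in the theorem, but since $\Phi/(mn^2)$ and $\abs{k}\Sigma/(mn^2)$ are themselves $o(n^{-1})$ this sits below the resolution of the stated error term, so your conclusion agrees with the statement to the claimed order.
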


\begin{remark}
For any matrix $M$, let $M^{ii}$ denote the $i$th diagonal element. The proof of Theorem~\ref{th:G-ACF_bias} yields the following intermediary expression (up to a $o(n^{-2})$) term:
\begin{align*} \label{eq:G-ACvF_bias}
\mathbb{E}\left[\hat{\Gamma}_{G,s}(k)^{ii}- \hat{\Gamma}_s(k)^{ii} \right] &=
     \left(1-m^{-1} \right)\left[\dfrac{\Sigma^{ii}}{n} + \dfrac{\abs{k}}{n^2}\left(\Sigma^{ii} - 2\sum_{h=0}^{n-1}\Gamma(h)^{ii}\right) + \left(1 + \dfrac{\abs{k}}{n}\right)\dfrac{\Phi^{ii}}{n^2}\right] \,.
\end{align*}
In the presence of positive autocorrelation, $\Phi^{ii}$ is positive yielding a positive difference in expectation. Since both estimators are under-biased for $\Gamma(k)^{ii}$ (as evidenced from Theorem~\ref{th:G-ACF_bias}), this implies the G-ACvF estimator yields bias reduction.
\end{remark}
\begin{remark}
Polynomial ergodicity and the moment conditions are required to ensure $\Phi$ and $\Sigma$ are finite. Theorem~\ref{th:G-ACF_bias} can be stated more generally for $\alpha$-mixing processes, but we limit our attention to Markov chains. 
\end{remark}

For component $i$, the autocorrelation is defined as
\[
\rho^{(i)}(k) = \dfrac{\Gamma^{ii}(k)}{\Gamma^{ii}(0)}\,,
\]
and is instrumental in visualizing the serial correlation in the components of the Markov chain. A standard estimator of the autocorrelation is constructed from $\hat{\Gamma}_s(k)$. Instead, we advocate for using G-ACvF, so that,
\begin{equation}
\label{eq:acf}
 \hat{\rho}_{G,s}^{(i)}(k) = \dfrac{ \hat{\Gamma}^{ii}_{G,s} (k)}{\hat{\Gamma}^{ii}_{G,s} (0)}. 
\end{equation}
The globally-centered autocorrelation provides a far more realistic assessment of the correlation structure of the marginal components of the chain  as evidenced in Figure~\ref{fig:gaussian-trace}. 
%
%
We end this section with the average G-ACvF and G-ACF over all $m$ chains, that provides a measure of the overall correlation structure induced by the Markov transition $P$. 
\[
\hat{\Gamma}_G(k) = \dfrac{1}{m}\sum_{s=1}^m \hat{\Gamma}_{G,s}(k) \quad \text{ and } \quad \hat{\rho}^{(i)}_G(k) = \dfrac{1}{m}\sum_{s=1}^m \hat{\rho}^{(i)}_{G,s}(k)\,.
\]

\section{Long-run variance estimators} \label{sec:variance_est}

A critical need of autocovariances is in the assessment of Monte Carlo variability of estimators.  Let $g:\X \to \mathbb{R}^p$ be an $F$-integrable function so that interest is in estimating
$\mu_g = \mathbb{E}_F[g(X)]$.
 %
Set $\{Y_{s,t}\}_{t \geq 1} = \{g(X_{s,t})\}_{t \geq 1}$ for $s = 1, \dots, m$. Let $\bar{Y}_s = n^{-1}\sum_{t=1}^{n}Y_{s,t}$  and  $\bar{\bar{Y}} = m^{-1}\sum_{s=1}^{m}\bar{Y}_s$. By Birkhoff's ergodic theorem,  $\bar{\bar{Y}} \to \mu_g$ with probability 1 as $n \to \infty$. An asymptotic sampling distribution may be available via a Markov chain central limit theorem (CLT) if there exists a $p \times p$ positive-definite matrix $\Sigma$ such that
\begin{align*}
  \sqrt{mn}(\bar{\bar{Y}} - \mu_g) \xrightarrow{d} N(0,\Sigma)\, \quad \text{ where }  \quad 
  \Sigma = \sum_{k = -\infty}^{\infty} \Cov_F \left( Y_{1,1}, Y_{1,1+k} \right) := \sum_{k = -\infty}^{\infty}\Upsilon (k)\,.
\end{align*}
The goal in output analysis for MCMC is to estimate $\Sigma$ in order to assess variability in $\bar{\bar{Y}}$ \citep{fleg:hara:jone:2008,roy:2019,vats:rob:fle:jon:2020}. There is a rich literature on estimating $\Sigma$ for single-chain MCMC implementations. The most common are SV estimators \citep{andr:1991,vats:fleg:jon:2018} and batch means estimators \citep{chen:seila:1987,vats:fleg:jon:2019}. Recently, \cite{gupta:vats:2020} constructed a replicated batch means estimator for estimating $\Sigma$ from parallel Markov chains. Batch means estimators are computationally more efficient than SV estimators, whereas SV estimators are more reliable \citep{damerdji:1995,fleg:jone:2010}. Here, we utilize G-ACvF estimators to construct globally-centered SV (G-SV) estimator of $\Sigma$. Using the method of \cite{heberle2017fast}, we also provide a computationally efficient implementation of the G-SV estimator. 

For $k \geq 0$, the locally and globally-centered  estimators of $\Upsilon(k)$ are
\[
\hat{\Upsilon}_s(k) = \dfrac{1}{n} \ds \sum_{t=1}^{n - k} (Y_{s,t} - \bar{Y}_s)(Y_{s,t+k} - \bar{Y}_s)^T \quad \text{ and } \quad \hat{\Upsilon}_{G,s}(k) = \dfrac{1}{n} \ds \sum_{t=1}^{n - k} (Y_{s,t} - \bar{\bar{Y}})(Y_{s,t+k} - \bar{\bar{Y}})^T\,,
\] 
respectively. Let $\hat{\Upsilon}_G(k) = m^{-1} \sum_{k=1}^{m}\hat{\Upsilon}_{G,s}(k)$. SV estimators are weighted and truncated sums of estimated ACvFs. For some $c \geq 1$, let $w: \mathbb{R} \to [-c,c]$ be a lag window function and $b_n \in \mathbb{N}$ be a truncation point.
\begin{ass}
\label{ass:lag_window}
The lag window $w(x)$ is continuous at all but a finite number of points, is a bounded and even function with $w(0)=1$, \; $\int_{-\infty}^{\infty}w^2(x)dx < \infty$, and $\int_{-\infty}^{\infty} \abs{w(x)} < \infty$.
\end{ass}

Assumption~\ref{ass:lag_window} is standard \citep[see][]{ande:1971}; we employ the popular Bartlett lag window in our simulations for which $w(x) = 1-|x|$ for $|x| \leq 1$ and 0 otherwise. The (locally-centered) SV estimator of $\Sigma$ for chain $s$ is
\begin{equation} \label{eq:sve}
    \hat{\Sigma}_{s} = \sum_{k=-b_n+1}^{b_n-1}w\left(\dfrac{k}{b_n}\right)\hat{\Upsilon}_s(k)\,.
\end{equation}

Large-sample properties of $\hat{\Sigma}_s$ have been widely studied. \cite{vats:fleg:jon:2018} provide conditions for strong consistency while \cite{fleg:jone:2010,hannan:1970} obtain bias and variance. These results extend naturally to an average SV (ASV) estimator, $\hat{\Sigma}_A := m^{-1} \sum_{s=1}^{m}\hat{\Sigma}_{s}$.


\subsection{Globally-centered spectral variance estimators} 
\label{sub:globally_centered_spectral_variance_estimators}


We define the G-SV estimator as the weighted and truncated sum of G-ACvFs
\begin{equation}
\label{eq:gsve_estimator}
    \hat{\Sigma}_{G} = \sum_{k= -b_n+1}^{b_n-1}w\left(\dfrac{k}{b_n}\right)\hat{\Upsilon}_{G}(k)\,.
\end{equation}
%
%



\subsubsection{Theoretical results} \label{sec:G-SVE}

First, we provide conditions for strong consistency. Strong consistency is particularly important to ensure that sequential stopping rules in MCMC yield correct coverage at termination \citep{fleg:gong:2015,glyn:whit:1992,vats:fleg:jon:there-yet}. A critical assumption is that of a strong invariance principle which the following theorem establishes. Let $B(n)$ denotes a standard $p$-dimensional Brownian motion. The proof of Theorem~\ref{th:consistency} is in the supplement.

\begin{theorem}[\cite{kuel:phil:1980,vats:fleg:jon:2018}]
  \label{thm:kuelbs}
Let $\E_F\|Y_{1,1}\|^{2+ \delta} < \infty$ for $\delta > 0$ and let $P$ be polynomially ergodic of order $\xi > (q + 1 + \epsilon)/(1 + 2/\delta)$ for $q \geq 1$.  There exists a $p \times p$ lower triangular matrix $L$ with $LL^T = \Sigma$, a non-negative function $\psi(n) = n^{1/2 - \lambda}$ for some $\lambda > 0$, a finite random variable $D$, and a sufficiently rich probability space $\Omega$ such that for all $n > n_0$,
\[
\left\|\sum_{t=1}^{n}Y_t - n\mu_g - LB(n)\right\| < D\psi(n) \qquad \text{  with probability 1}\,.
\]
\end{theorem}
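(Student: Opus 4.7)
The plan is to reduce the strong invariance principle for $\{Y_t\}$ to the classical multivariate invariance principle of Kuelbs and Philipp (1980) for $\alpha$-mixing sequences. The first step is to translate polynomial ergodicity of the kernel $P$ into polynomial decay of the strong mixing coefficients of the stationary chain: it is well known that if $P$ is polynomially ergodic of order $\xi$ under the $F$-invariant stationary initialization, then $\alpha(n) = O(n^{-\xi})$ for $\{X_t\}$. Since $Y_t = g(X_t)$ is a measurable transformation, its mixing coefficients are dominated by those of $\{X_t\}$, so $\{Y_t\}$ is also $\alpha$-mixing at rate $\xi$ and inherits the $(2+\delta)$-moment from the hypothesis on $\|Y_{11}\|$.

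Next, I would invoke the Kuelbs--Philipp multivariate strong invariance principle on $\{Y_t\}$. This delivers, on a suitably enlarged probability space, a standard $p$-dimensional Brownian motion $B(\cdot)$ such that $\|\sum_{t \leq n} Y_t - n\mu_g - L B(n)\| < D\,\psi(n)$ almost surely, where $L$ is the lower-triangular Cholesky factor of the long-run covariance $\Sigma$ and $\psi(n) = n^{1/2-\lambda}$. Existence of $L$ follows from positive-definiteness of $\Sigma$, which is in turn guaranteed by the Markov chain CLT that holds under the stated moment and mixing conditions.

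The connection between hypothesis and conclusion comes from a careful reading of the Kuelbs--Philipp rate bound: a strictly positive $\lambda$ with embedding quality indexed by $q$ is available exactly when the mixing/moment trade-off $\xi(1+2/\delta) > q+1+\epsilon$ is satisfied, which is precisely the assumption of the theorem. Larger $q$ yields a sharper remainder at the cost of a stronger trade-off; the random variable $D$ and the threshold $n_0$ are inherited directly from Kuelbs--Philipp. Vats, Flegal, and Jones (2018) carried out this reduction explicitly for Harris ergodic Markov chains, so in practice one cites their statement directly.

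The main obstacle is not a single deep analytic step but the careful bookkeeping that links the Markov-chain ergodicity exponent $\xi$ to an effective $\alpha$-mixing rate and then matches that rate against Kuelbs--Philipp's moment/mixing trade-off to extract a strictly positive $\lambda$. Everything else --- the Cholesky factorization of $\Sigma$, stationarity of $\{Y_t\}$, and the almost-sure Brownian embedding --- is standard once the mixing-to-moment translation is in place.
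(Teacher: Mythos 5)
Your outline is correct and matches how this result is actually established: the paper offers no proof of its own but cites \cite{kuel:phil:1980} and \cite{vats:fleg:jon:2018}, and the latter's argument is exactly the reduction you sketch (polynomial ergodicity gives polynomially decaying $\alpha$-mixing coefficients, which are inherited by $Y_t = g(X_t)$, and the moment/mixing trade-off $\xi(1+2/\delta) > q+1+\epsilon$ is then matched against the Kuelbs--Philipp strong approximation to extract $\psi(n) = n^{1/2-\lambda}$ with $\lambda > 0$). The only minor caveat is that positive-definiteness of $\Sigma$ is an assumption accompanying the CLT rather than a consequence of it, but this does not affect the argument.
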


\begin{theorem}
\label{th:consistency}
 Let the assumptions of Theorem~\ref{thm:kuelbs} hold with $q = 1$. If $\hat{\Sigma}_{s} \xrightarrow{a.s.} \Sigma$ for all $s$, and $n^{-1}{b_n \log \log n} \to 0 \textrm{ as } n \to \infty$, then $\hat{\Sigma}_{G} \overset{a.s.}{\to} \Sigma$ as $n \to \infty$.
\end{theorem}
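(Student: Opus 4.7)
The proof plan is to decompose
\[
\hat{\Sigma}_G = \hat{\Sigma}_A + (\hat{\Sigma}_G - \hat{\Sigma}_A), \qquad \hat{\Sigma}_A := m^{-1}\sum_{s=1}^m \hat{\Sigma}_s,
\]
and to exploit the two facts that (i) $\hat{\Sigma}_A \xrightarrow{a.s.} \Sigma$ by averaging the hypothesis $\hat{\Sigma}_s \xrightarrow{a.s.} \Sigma$ over $s$, and (ii) $\hat{\Sigma}_G - \hat{\Sigma}_A \to 0$ almost surely. The real content is (ii), so the whole argument reduces to controlling this difference.

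For step (ii), write $\Delta_s = \bar{Y}_s - \bar{\bar{Y}}$ and plug $Y_{st} - \bar{\bar{Y}} = (Y_{st} - \bar{Y}_s) + \Delta_s$ into the definition of $\hat{\Upsilon}_{G,s}(k)$. Direct expansion gives, for each $s$ and $k \geq 0$,
\[
\hat{\Upsilon}_{G,s}(k) - \hat{\Upsilon}_s(k) = \frac{n-k}{n}\Delta_s \Delta_s^T + \frac{1}{n}u_s(1,n-k)\Delta_s^T + \frac{1}{n}\Delta_s\, u_s(k+1,n)^T,
\]
where $u_s(a,b) = \sum_{t=a}^{b} (Y_{st} - \bar{Y}_s)$. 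The identity $u_s(1,n) = 0$ reduces the two cross-terms to $-u_s(n-k+1,n)\Delta_s^T/n$ and $-\Delta_s u_s(1,k)^T/n$, each involving a sum over only $k$ recentered terms. Weighting by $w(k/b_n)$, summing over lags, and averaging over $s$ gives an explicit expression for $\hat{\Sigma}_G - \hat{\Sigma}_A$.

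Next I apply the strong invariance principle (Theorem~\ref{thm:kuelbs}) to each chain together with the law of the iterated logarithm for Brownian motion to extract the almost-sure rates
\[
\|\Delta_s\| = O\bigl(\sqrt{\log\log n/n}\bigr), \qquad \|u_s(1,k)\| + \|u_s(n-k+1,n)\| = O\bigl(\sqrt{k\log\log n}\bigr)
\]
uniformly in $1 \leq k \leq n$. Using Assumption~\ref{ass:lag_window} to bound $|w|$ by a constant $c$, the $\Delta_s\Delta_s^T$ contribution to $\hat{\Sigma}_G - \hat{\Sigma}_A$ is of order $b_n \log\log n / n$, which is $o(1)$ by hypothesis, while the cross-terms sum to $O\bigl((b_n/n)^{3/2}\log\log n\bigr) = o(1)$ as well. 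Combining with (i) yields $\hat{\Sigma}_G \xrightarrow{a.s.} \Sigma$.

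The main obstacle will be the boundary partial sums $u_s(1,k)$ and $u_s(n-k+1,n)$: each grows like $\sqrt{k}$ rather than being $O(1)$, and there are $\Theta(b_n)$ of them, so the naive bound is too crude. The trick is to rewrite each as a difference of centered sums, replace by Brownian increments $L\bigl(B(n) - B(n-k)\bigr)$ modulo the SIP remainder $O(\psi(n))$, and invoke the LIL to obtain the $\sqrt{k\log\log n}$ rate uniformly in $k$. Once this rate is secured, matching it against the $\sqrt{\log\log n/n}$ rate for $\|\Delta_s\|$ and against the bandwidth condition $b_n \log\log n/n \to 0$ is routine bookkeeping.
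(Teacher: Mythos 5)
Your strategy is sound and the theorem does follow along the lines you sketch, but your route differs structurally from the paper's. The paper introduces a pseudo estimator $\tilde{\Sigma}_A$ built from data centered at the \emph{true} mean $\mu_g$ and makes two comparisons: $\hat{\Sigma}_G = \tilde{\Sigma}_A + M_1 + M_2$ (Lemma~\ref{lemma:G-SVE_breakdown}) and $\tilde{\Sigma}_A = \hat{\Sigma}_A + N_1 + N_2$, controlling all four remainders with the strong invariance principle and the Brownian LIL bounds quoted in the appendix. You instead compare $\hat{\Sigma}_G$ to $\hat{\Sigma}_A$ directly by writing $Y_{st}-\bar{\bar{Y}} = (Y_{st}-\bar{Y}_s)+\Delta_s$ and exploiting the exact identity $\sum_{t=1}^n (Y_{st}-\bar{Y}_s)=0$, which collapses the cross terms to boundary sums of $|k|$ recentered observations. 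This is more economical: one comparison instead of two, and the centering identity does some of the work that the paper's $\mu_g$-centering does. The toolkit is the same in both arguments (SIP, LIL, the bandwidth condition $b_n\log\log n/n\to 0$, and the assumed consistency of each $\hat{\Sigma}_s$), and the dominant error terms are of the same order $b_n\log\log n/n$.

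One step, as stated, is not justified, though it is easily repaired. You claim $\|u_s(n-k+1,n)\| = O\bigl(\sqrt{k\log\log n}\bigr)$ uniformly in $k$. Writing $u_s(n-k+1,n) = (S_n - S_{n-k}) - \tfrac{k}{n}S_n$ with $S_j=\sum_{t\le j}(Y_{st}-\mu_g)$, the SIP gives $S_n - S_{n-k} = L\bigl(B(n)-B(n-k)\bigr) + O(\psi(n))$: the remainder is $O(\psi(n))=O(n^{1/2-\lambda})$, not $O(\psi(k))$, so it cannot be absorbed into $\sqrt{k\log\log n}$ for small $k$; moreover, the uniform bound on endpoint increments of Brownian motion over windows $k\le b_n$ (the Cs\"org\H{o}--R\'ev\'esz lemma in the appendix) is of order $\sqrt{b_n(\log(n/b_n)+\log\log n)}$, carrying a $\log(n/b_n)$ factor rather than the per-$k$ rate $\sqrt{k\log\log n}$ you invoke. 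Neither issue is fatal, because the per-$k$ rate is not needed: bounding every boundary sum crudely by $\|S_n\|+\|S_{n-k}\|+\tfrac{k}{n}\|S_n\| = O\bigl(\sqrt{n\log\log n}\bigr) + O(\psi(n))$ --- which is exactly how the paper treats the analogous terms in $M_1$ and $N_1$ --- already makes the total cross-term contribution $O\bigl(b_n\log\log n/n\bigr) + O\bigl(b_n\psi(n)\sqrt{\log\log n}\,n^{-3/2}\bigr)$, which vanishes under the stated bandwidth condition; the $\Delta_s\Delta_s^T$ term is $O(b_n\log\log n/n)$ as you say. With that substitution your argument is complete.
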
 

Conditions for strong consistency of $\hat{\Sigma}_s$ for polynomially ergodic Markov chains are in \cite{vats:fleg:jon:2018}. Typically, $b_n = \lfloor n^{\nu} \rfloor$ for some $0< \nu <1$ for which $n^{-1} b_n \log \log n \to 0$, thus Theorem~\ref{th:consistency} presents no added conditions for strong consistency. Our next two results establish large-sample bias and variance for G-SV and mimic those of $\hat{\Sigma}_s$ \citep{hannan:1970}. Let $\Sigma^{ij}$ and $\hat{\Sigma}_G^{ij}$ denote the $ij$th element of the matrix $\Sigma$ and $\hat{\Sigma}_G$, respectively. The proofs of the results below can be found the supplement.

\begin{theorem}\label{th:G-SVE_bias}
Let the assumptions of Theorem~\ref{thm:kuelbs} hold with $q$ such that 
\[
\lim _{x \to 0} \dfrac{1 - w(x)}{\abs{x}^q} = k_q < \infty\,
\] 
and $b_n^{q+1}/n \to 0$ as $n \to \infty$. Then, $ \lim_{n \to \infty}b_n^q\mathbb{E} [\hat{\Sigma}_{G} - \Sigma ] = -k_q\Phi^{(q)}\,.$
\end{theorem}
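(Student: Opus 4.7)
The plan is to reduce the theorem to the classical Andrews/Hannan-style bias computation by first applying the G-ACvF bias expansion (Theorem~\ref{th:G-ACF_bias}) componentwise to $\hat{\Upsilon}_G(k)$ and then showing that the ``globally-centered'' correction terms contribute only lower-order bias. Writing $\Phi = \sum_{k} |k|\, \Upsilon(k)$ (the analogue of $\Phi$ for the $Y$-process), Theorem~\ref{th:G-ACF_bias} applied to $\{Y_{st}\}$ gives
\[
\mathbb{E}[\hat{\Upsilon}_G(k)]
= \Bigl(1 - \tfrac{|k|}{n}\Bigr)\Bigl(\Upsilon(k) - \tfrac{\Sigma}{mn} - \tfrac{\Phi}{mn^2}\Bigr) + o(n^{-1}),
\]
uniformly in $k$ in the range $|k| < b_n$ (uniformity will be justified via the underlying CLT-style moment bounds used in the proof of Theorem~\ref{th:G-ACF_bias}; this is the point that warrants the most care).

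Substituting into the definition of $\hat{\Sigma}_G$ and subtracting $\Sigma = \sum_{k=-\infty}^{\infty}\Upsilon(k)$, I would split the resulting bias as
\[
\mathbb{E}[\hat{\Sigma}_G] - \Sigma
= \underbrace{\sum_{|k|<b_n} w\!\bigl(\tfrac{k}{b_n}\bigr)\Upsilon(k) - \Sigma}_{T_1}
\;-\; \underbrace{\tfrac{1}{n}\sum_{|k|<b_n} w\!\bigl(\tfrac{k}{b_n}\bigr)|k|\,\Upsilon(k)}_{T_2}
\;+\; R_n,
\]
where $R_n$ collects the contributions involving $\Sigma/(mn)$ and $\Phi/(mn^2)$ together with the $o(n^{-1})$ remainder. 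Each of these pieces is then multiplied by $b_n^q$.

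For the remainder $R_n$: since $w$ is bounded and the sum has at most $2b_n-1$ terms, $|R_n| = O(b_n/n)$ plus $b_n \cdot o(n^{-1}) = o(b_n/n)$. Multiplied by $b_n^q$, this is $O(b_n^{q+1}/n) = o(1)$ using the hypothesis $b_n^{q+1}/n \to 0$. The same bound handles $b_n^q T_2$, because $\sum_{|k|<b_n} |k| |\Upsilon(k)| \leq \Phi < \infty$ and $w$ is uniformly bounded, so $b_n^q T_2 = O(b_n^q/n) \to 0$.

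The main term $T_1$ is handled by the standard SV decomposition:
\[
-T_1 = \sum_{|k|<b_n}\!\bigl(1 - w(k/b_n)\bigr)\Upsilon(k) \;+\; \sum_{|k|\ge b_n}\Upsilon(k).
\]
The second tail sum satisfies $b_n^q \sum_{|k|\geq b_n}|\Upsilon(k)| \leq \sum_{|k|\geq b_n}|k|^q|\Upsilon(k)| \to 0$ because $\Phi^{(q)} < \infty$. For the first sum, rewrite
\[
b_n^q \!\!\sum_{|k|<b_n}\!\!\bigl(1 - w(k/b_n)\bigr)\Upsilon(k)
= \sum_{|k|<b_n} \frac{1 - w(k/b_n)}{|k/b_n|^q}\, |k|^q\,\Upsilon(k),
\]
and apply dominated convergence with dominating sequence $c\,|k|^q\|\Upsilon(k)\|$ (using boundedness of $(1-w(x))/|x|^q$, which follows from continuity of $w$ together with the limit $k_q$ at $0$ and boundedness of $w$ away from $0$); the pointwise limit of each summand is $k_q |k|^q \Upsilon(k)$. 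Summing to $k_q \Phi^{(q)}$ and combining with the preceding negligible pieces yields $b_n^q\,\mathbb{E}[\hat{\Sigma}_G - \Sigma] \to -k_q \Phi^{(q)}$.

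The main obstacle is the dominated convergence step on the interior sum, and more subtly, the uniformity of the $o(n^{-1})$ remainder in Theorem~\ref{th:G-ACF_bias} across $|k| < b_n$. I would address the latter by tracing through the proof of Theorem~\ref{th:G-ACF_bias} to verify that the implied constant depends only on moments and mixing rates and not on $k$, which in turn relies on the polynomial ergodicity hypothesis supplying a uniform bound $\sum_k |k|\|\Upsilon(k)\| < \infty$.
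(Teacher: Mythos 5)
Your proposal is correct and follows essentially the same route as the paper: expand $\mathbb{E}[\hat{\Upsilon}_G(k)]$ via the G-ACvF bias result, isolate the classical truncation/weighting terms (handled by bounded convergence and the tail bound from $\Phi^{(q)}<\infty$), and absorb the global-centering corrections into an $O(b_n/n)$ remainder that $b_n^{q+1}/n \to 0$ kills after multiplication by $b_n^q$. The uniformity-in-$k$ issue you flag is resolved exactly as you suggest: the paper works with the explicit decomposition \eqref{eq:G-ACF_expec_breakdown} from the proof of Theorem~\ref{th:G-ACF_bias}, in which the $k$-dependence of the correction terms enters only through the explicit factor $\abs{k}/n$, so the remainder is uniform over $\abs{k}<b_n$.
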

\cite{hannan:1970} proved a similar bias result assuming $\mu_g$ was known with the rate $b_n^q/n \to 0$ as $n \to \infty$. Similar to \cite{ande:1971} for the univariate case, if $\mu_g$ is replaced by $\bar{\bar{Y}}$, we require $b_n^{q+1}/n \to 0$ as $n\to \infty$ for the multivariate case.

\begin{theorem} \label{th:G-SVE_variance}
 Let the assumptions of Theorem~\ref{thm:kuelbs} hold and let $\E[D^4] < \infty$  and  $\E_F \|Y_{1,1}\|^4 < \infty$, then $ \lim_{n \to \infty} b_n^{-1}{n}\Var \left(\hat{\Sigma}_{G}^{ij} \right) = [\Sigma_{ii}\Sigma_{jj} + \Sigma_{ij}^2]\int_{-\infty}^{\infty}w(x)^2dx $.
\end{theorem}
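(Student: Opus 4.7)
My plan is to adapt the classical Hannan-Anderson strategy for spectral variance estimators to the globally-centered, multi-chain setting, using the strong invariance principle (SIP) to reduce the computation of $\Var(\hat{\Sigma}_G^{ij})$ to an explicit Gaussian moment calculation.

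\textbf{Step 1: Recentering to the true mean.} I would first show that substituting $\mu_g$ for $\bar{\bar{Y}}$ in the definition of $\hat{\Sigma}_G$ produces an estimator $\tilde{\Sigma}_G$ that differs from $\hat{\Sigma}_G$ by a remainder whose variance is $o(b_n/n)$. Writing $Y_{st} - \bar{\bar{Y}} = (Y_{st} - \mu_g) - (\bar{\bar{Y}} - \mu_g)$, expansion of the bilinear product generates two cross terms linear in $\bar{\bar{Y}} - \mu_g$ and a single quadratic term. Since Theorem~\ref{thm:kuelbs} applied to the pooled process gives $\bar{\bar{Y}} - \mu_g = O_{a.s.}((mn)^{-1/2})$, while the weighted chain sums have bounded second moments, the cross and quadratic contributions decay faster than $\sqrt{b_n/n}$, so their effect on the variance is negligible.

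\textbf{Step 2: SIP coupling to Brownian increments.} Applying Theorem~\ref{thm:kuelbs} with $q = 1$ chain-by-chain, and exploiting independence of the $m$ chains, I couple each chain's partial sums to an independent Brownian motion $L B_s(\cdot)$ with error bounded by $D_s \psi(n)$, $\psi(n) = n^{1/2 - \lambda}$. Substituting the Brownian increments $\Delta_{st} := L(B_s(t) - B_s(t-1))$ in place of the centered observations in $\tilde{\Sigma}_G$ introduces an additional error. Using Cauchy-Schwarz together with the hypotheses $\E[D^4] < \infty$ and $\E_F\|Y_{11}\|^4 < \infty$, this error has $L^2$-norm of order $\psi(n) b_n / n = o(\sqrt{b_n/n})$, so it contributes $o(b_n/n)$ to the variance.

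\textbf{Step 3: Gaussian variance computation.} With the observations reduced to i.i.d.\ $N(0,\Sigma)$ Brownian increments, the remaining variance is evaluated exactly by Isserlis' (Wick's) formula for fourth moments of centered Gaussians. Within each chain $s$,
\[
\Cov\bigl(\Delta_{st}^i \Delta_{s,t+k}^j,\ \Delta_{su}^i \Delta_{s,u+l}^j\bigr) = \Sigma_{ii}\Sigma_{jj}\,\mathbf{1}\{t=u,\,k=l\} + \Sigma_{ij}^2\,\mathbf{1}\{u = t+k,\,l = -k\},
\]
and cross-chain covariances vanish by independence. Substituting into the double sum defining $\Var(\tilde{\Sigma}_G^{ij})$, collapsing the indicator-supported pairs yields $n - |k|$ surviving index pairs per lag for each of the two terms, producing (after cancellation of an $n$ from the $n^{-2}$ prefactor) the quantity
\[
\frac{1}{n}\sum_{|k| < b_n} \bigl(1 - \tfrac{|k|}{n}\bigr) w(k/b_n)^2 \bigl[\Sigma_{ii}\Sigma_{jj} + \Sigma_{ij}^2\bigr].
\]
Multiplying by $n/b_n$ and recognising the Riemann sum $b_n^{-1}\sum_k w(k/b_n)^2 \to \int_{-\infty}^\infty w(x)^2\,dx$ (under Assumption~\ref{ass:lag_window} with $b_n \to \infty$ and $b_n/n \to 0$) delivers the stated limit.

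\textbf{Main obstacle.} The hardest step will be Step 2: carefully verifying that the SIP coupling error, once inserted into a $b_n$-weighted quadratic functional, still contributes $o(b_n/n)$ to the variance. The fourth-moment hypotheses $\E[D^4] < \infty$ and $\E_F\|Y_{11}\|^4 < \infty$ are tailored precisely to control the mixed linear-quadratic cross terms via Cauchy-Schwarz. A secondary nuisance, the boundary factor $(1 - |k|/n)$ for lags approaching $b_n$, is benign because $b_n/n \to 0$ and $w^2$ is integrable under Assumption~\ref{ass:lag_window}.
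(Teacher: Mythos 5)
Your Step 1 coincides with the paper's own reduction: the paper's Lemma~\ref{lemma:G-SVE_breakdown} bounds $|\hat{\Sigma}_{G}^{ij}-\tilde{\Sigma}_A^{ij}|$ by $D^2g_1(n)+Dg_2(n)+g_3(n)$, and the hypotheses $\E[D^4]<\infty$ and $\E_F\|Y_{11}\|^4<\infty$ are used exactly as you anticipate (Cauchy--Schwarz plus a dominated/majorized convergence argument) to show that recentering at $\bar{\bar{Y}}$ rather than $\mu_g$ does not affect the variance asymptotics. After that, the routes diverge: the paper does \emph{not} compute the variance of the $\mu_g$-centered estimator at all --- it cites Hannan (1970) for $\tfrac{n}{b_n}\Var(\tilde{\Sigma}^{ij})\to[\Sigma_{ii}\Sigma_{jj}+\Sigma_{ij}^2]\int w^2(x)\,dx$, a result proved for the actual stationary sequence under fourth-moment conditions (in particular it handles the fourth-cumulant contribution directly), whereas you propose to re-derive that limit by replacing the centered observations with Brownian increments via the SIP and then applying Isserlis' formula.

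That replacement (your Step 2) is a genuine gap. Theorem~\ref{thm:kuelbs} controls only partial sums, $\|\sum_{t\le n}Y_t-n\mu_g-LB(n)\|\le D\psi(n)$ almost surely; it gives no control over individual increments, so you cannot substitute $\Delta_{st}$ for $Y_{st}-\mu_g$ term-by-term inside the lag-weighted quadratic form and bound the error by Cauchy--Schwarz as asserted. To use the SIP there you would first have to rewrite the weighted autocovariance sum in terms of partial sums (summation by parts or an overlapping-batch representation), and the resulting error terms are of size roughly $D\psi(n)\sqrt{b_n\log\log n}/n$ and $D^2\psi(n)^2b_n/n^2$, not the claimed $\psi(n)b_n/n$. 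Moreover, even granting your rate, $\psi(n)b_n/n=o(\sqrt{b_n/n})$ holds only if $b_n n^{-2\lambda}\to 0$, and the SIP exponent $\lambda>0$ is unspecified (it can be arbitrarily small depending on the ergodicity rate and $\delta$), so this is an additional bandwidth condition that is not among the theorem's hypotheses. In effect your Gaussian-proxy plan silently discards the fourth-cumulant term of the non-Gaussian process; the paper's appeal to Hannan is precisely what disposes of that term under $\E_F\|Y_{11}\|^4<\infty$. (Your Step 3 Wick computation is fine for the Gaussian proxy, and the absence of a $1/m$ factor matches the theorem statement and the paper's proof, so it is not a point of divergence.) The direct repair is to keep your Step 1 and then invoke the known variance expansion for the known-mean spectral variance estimator, rather than attempting the Brownian-increment coupling.
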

For the Bartlett lag window, $q = 1$ with $k_q = 1$ and $\int w(x)^2 dx = 2/3$. For other popular lag windows, see \cite{ande:1971}.

\begin{remark}
  The asymptotic results in Theorems~\ref{th:G-SVE_bias} and \ref{th:G-SVE_variance} are similar to that of the A-SV estimator. This is unsurprising since the global and the local means are asymptotically equivalent. Unfortunately, finite sample results are unavailable for even the A-SV estimator.
\end{remark}
\subsubsection{Fast implementation} 
\label{ssub:fast_implementation}

The SV estimator, despite having good statistical properties, poses limitations due to slow computation. The complexity of the SV estimator is $\mathcal{O}(b_n n p^2)$. For slow-mixing Markov chains, $n$ and $b_n$ can be prohibitively large, limiting the use of SV estimators.

We adapt the  fast Fourier transform based algorithm of \cite{heberle2017fast} to calculate the G-SV estimator.  Let $w_k = w(k/b_n)$ and let $T(w)$ be the $n \times n$ Toeplitz matrix with the first column being $(1 ~ w_1 ~ w_2 ~ \dots ~ w_{n-1})^T$. Notice an alternate formulation of $\hat{\Sigma}_s$
\begin{equation*} \label{eq:kyriakoulis}
    \hat{\Sigma}_s = \dfrac{1}{n}A_s^T T(w) A_s, \qquad \textrm{ where } \quad  A_s = \begin{pmatrix}
    Y_{s,1} - \bar{Y}_s  & \dots & Y_{s,n} - \bar{Y}_s
\end{pmatrix}^T \,.
\end{equation*}
 Let $w^* = (1 ~ w_1 ~ w_2 ~ \dots, ~ w_{n-1}, ~0, ~w_{n-1}, \dots, w_1)^T$ and set $C(w^*)$ to be a symmetric circulant matrix such that the matrix truncation $C(w^*)_{1:n, 1:n} = T(w)$. Let $M_{(j)}$ denote the $j$th column of a matrix $M$ and $v^{(i)}$ denote the $i$th element of a vector $v$. With inputs $C(w^*)$ and $A_s$, Algorithm~\ref{algo:herberle} produces $\hat{\Sigma}_s$ exactly. For more details, see \cite{heberle2017fast}.

\begin{algorithm}[htbp] 
\KwIn{$C(w^*)$ and $A_s$}
\DontPrintSemicolon
\SetAlgoLined
Compute eigenvalues $\lambda_i$ of  $C(w^*)_{(1)}$ using a discrete Fourier transform, $i = 1, \dots, 2n$\;
Construct $2n \times p$ matrix $A^*_s = (A_s^T  \quad 0_{n \times p})^T$\;

\For{$j = 1, 2, \dots, p$}    
    { 
    Calculate $V^*A^*_{s(j)}$ by DFT of $A^*_{s(j)}$.\;
    Multiply $V^* A_{s(j)}^{*^{(i)}}$ with the eigenvalue $\lambda_i$ for all $i = 1, \dots, 2n$ to construct $\Lambda V^* A_{s(j)}^*$.\;
    Calculate $C(w^*)A^*_{s(j)} = V \Lambda V^* A_{s(j)}^*$ by inverse FFT of $\Lambda V^* A_{s(j)}^*$.\;
    }
 Select the first $n$ rows of $C(w^*)A^*_s$ to form $T(w)A_s$.\;
 Premultiply by $A_s^T$ and divide by $n$.\;
 \KwOut{$\hat{\Sigma}_s$}
 \caption{\cite{heberle2017fast} Algorithm}
 \label{algo:herberle}
\end{algorithm}

We observe that a similar decomposition is possible for the G-SV estimator. Setting $B_s = (Y_{s1} - \bar{\bar{Y}} \; \dots \; Y_{sn} - \bar{\bar{Y}})^T$ and calling Algorithm~\ref{algo:herberle} with inputs $C(w^*)$ and $B_s$, yields $\hat{\Sigma}_{G,s}$. Algorithm~\ref{algo:herberle} has complexity $\mathcal{O}(n \log n p)$ and is thus orders of magnitude faster. Critically, the bandwidth $b_n$ has close to no impact on the computation time.

\section{Effective sample size} \label{sec:ess}

A useful method of assessing the reliability of $\bar{\bar{Y}}$ in estimating $\mu_g$ is effective sample size (ESS). ESS are number of independent and identically distributed samples that would yield the same Monte Carlo variability in $\bar{\bar{Y}}$ as this correlated sample. Let $|\cdot|$ denote determinant. A multiple chain version of the ESS as defined by \cite{vats:fleg:jon:2019} is
\[
\textrm{ESS} = mn \left(\dfrac{|\Upsilon(0)|}{|\Sigma|}\right)^{1/p}\, .
\]
ESS helps users evaluate the quality of their estimation  as it compares to vanilla Monte Carlo, and thus provides easy intuition. In our setting of $m$ parallel chains of $n$ samples each, we estimate ESS with
\[
\widehat{\textrm{ESS}}_G = mn\left(\dfrac{|\hat{\Upsilon}(0)|}{|\hat{\Sigma}_{G}|}\right)^{1/p}\, .
\]
We use the locally-centered $\hat{\Upsilon}(0)$ to estimate $\Upsilon(0)$ instead of $\hat{\Upsilon}_G(0)$ when calculating ESS. This choice controls the overestimation of ESS for slow-mixing Markov chains. Both $\hat{\Upsilon}(0)$ and $\hat{\Upsilon}_G(0)$ are consistent for $\Upsilon(0)$.  For comparison, $\widehat{\textrm{ESS}}_A$ is constructed similarly using $\hat{\Sigma}_A$ instead of $\hat{\Sigma}_G$ to estimate $\Sigma$.  

ESS is employed in determining when to stop an MCMC simulation. \cite{gong:fleg:2016} and \cite{vats:fleg:jon:2019} show that stopping the simulation when the estimated ESS is greater than a pre-specified lower-bound $W_{p}$ yields theoretically valid inference. Further, \cite{vats:knud:2018} establish a one-to-one relationship between ESS and the \cite{gelm:rubi:1992a} potential scale reduction factor, $\hat{R}$. They show that
\[
    \hat{R} = \sqrt{1 + \dfrac{m}{\widehat{\text{ESS}}}}\,.
\]
Different estimates of ESS yield different estimates of the population potential scale reduction factor. Terminating the simulation when  ESS $ > W_p$ is then equivalent to terminating when $\hat{R} \approx 1$. As we will demonstrate in our example, for slow-mixing Markov chains, our proposed estimator $\widehat{\text{ESS}}_G$ provides a considerable improvement over $\widehat{\text{ESS}}_A$, thereby yielding a more robust convergence diagnostic. Here, we use the term ``convergence diagnostic'' to signify convergence of the Monte Carlo estimator, and not the convergence of the Markov chain. All ACF plots in this manuscript have been constructed using the \texttt{R} package \texttt{multichainACF} \footnote{\texttt{https://github.com/medhaaga/multichainACF}} and  reproducible code for the examples is also available publicly\footnote{\texttt{https://github.com/medhaaga/Replicated-Spectral-Variance-Estimator}}.

\section{Examples} \label{sec:examples}

For three different target distributions we sample $m$ parallel Markov chains to assess the performance of our proposed estimators. We make the following three comparisons - (1) locally-centered ACF vs G-ACF estimator, (2) A-SV vs G-SV estimator, and (3) $\widehat{\textrm{ESS}}_A$ vs $\widehat{\textrm{ESS}}_G$. The quality of estimation of $\Sigma$ is studied by coverage probabilities of a $95 \%$ Wald  confidence region when the true mean $\mu_g$ is known. The convergence of local and global estimators of $\Sigma$ and $\textrm{ESS}$ as $n$ increases is studied through two types of running plots (1) logarithm of Frobenius norm of estimated $\Sigma$, and (2) logarithm of estimated ESS/$mn$. In all three examples, we estimate the mean of the stationary distribution, so $g$ is the identity function. The truncation point $b_n$ are the defaults available in the \texttt{R} package \texttt{mcmcse} \citep{mcmcse}. A variety of other examples and simulations are available in the supplement.

\subsection{Vector autoregressive process} \label{ex:var}

Our first example is one where the $\Upsilon$ and $\Sigma$ are available in closed-form, to truly assess the quality of estimation. Consider a $p$-dimensional VAR(1) process $\{X_t\}_{t \geq 1}$ such that
\begin{equation*}
X_t = \Xi X_{t-1} + \epsilon_t\,,  
\end{equation*}
where $X_t \in \mathbb{R}^p$, $\Xi $ is a $p \times p $ matrix, $ \epsilon_t \overset{\text{iid}}{\sim} N(0, \Omega)$, and $\Omega$ is a positive-definite $p \times p$ matrix. We fix $\Omega$ to be an AR correlation matrix with parameter .9. The invariant distribution for this Markov chain is $N(0, \Psi)$, where $vec(\Psi) = (I_{p^2} - \Xi \times \Xi)^{-1} vec(\Omega)$.  For $k \geq 0$, the lag-$k$ autocovariance is $\Upsilon(k) = \Gamma(k) = \Xi^k\Xi$. The process satisfies a CLT if the spectral norm of $\Xi$ is less than 1 \citep{tjos:1990} and the limiting covariance, $\Sigma$, is known in closed form \citep{dai:jon:2017}. We set $p = 2$ and set $\Xi$ to have eigenvalues .999 and .001. 
We further set $m = 5$ with starting values dispersed across the state space.  

We compare the the locally and globally-centered autocorrelations against the truth. In Figure~\ref{fig:var-acf} are the estimated ACF plots for the first component of the second chain against the truth in red. For a run length of $10^3$ (top row), the commonly used locally-centered ACF underestimates the true correlation giving a false sense of security about the mixing of the chain. The G-ACF estimator, on the other hand, is far more accurate. This difference is negligible at the larger run length of $n = 10^4$ when each of the 5 chains have sufficiently explored the state space.

\begin{figure}[h]
\centering
   \includegraphics[width=.60\linewidth]{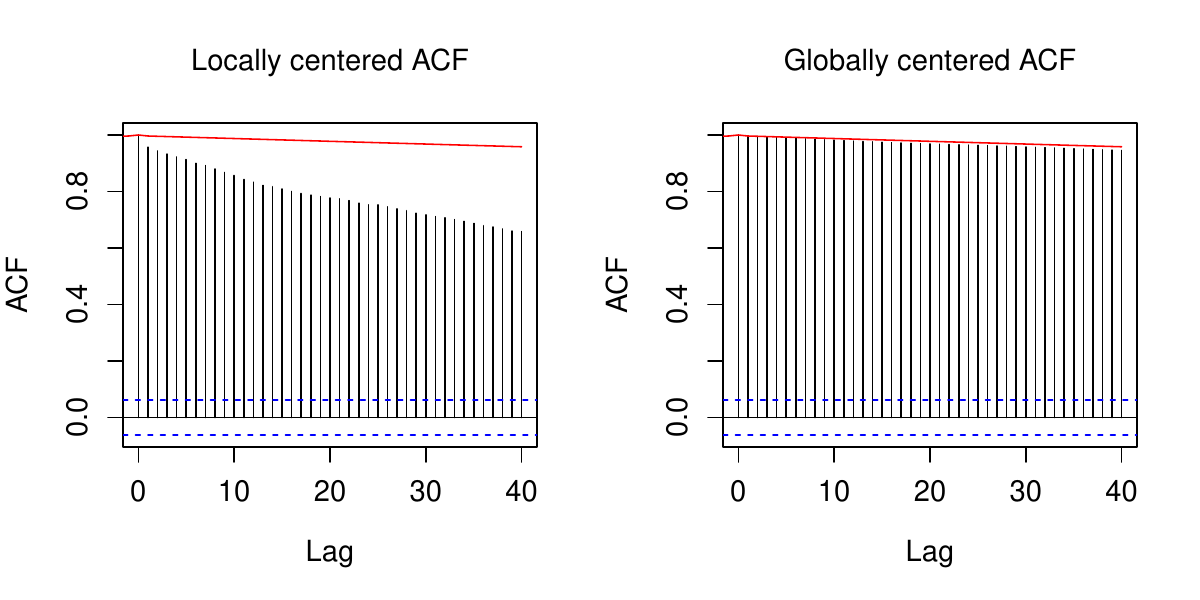}
     \includegraphics[width=.60\linewidth]{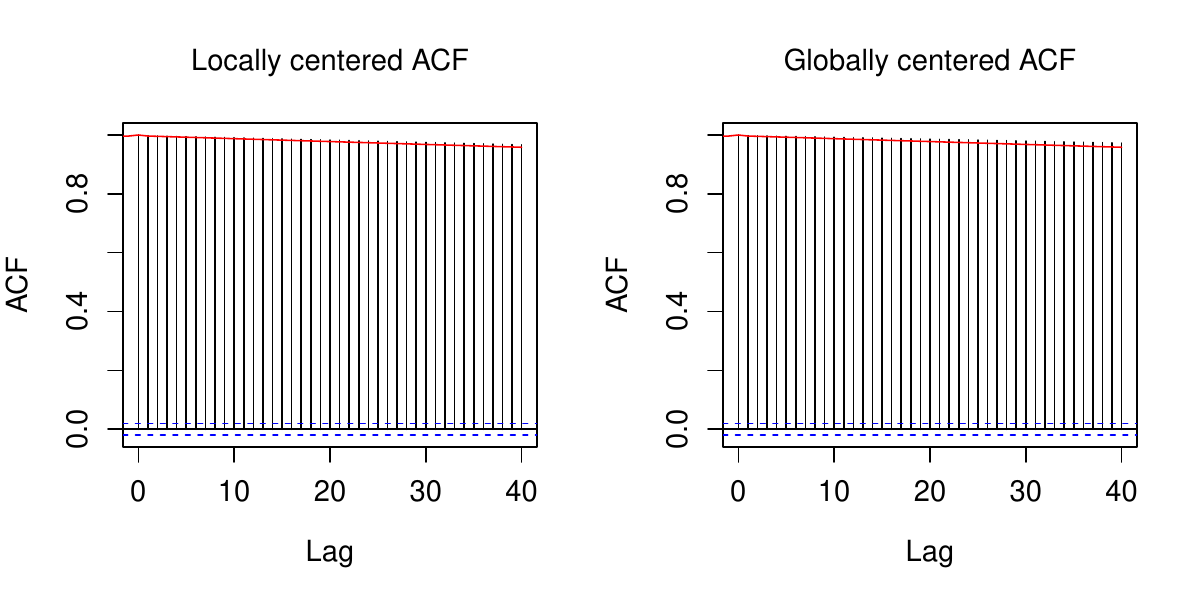} 
\caption{VAR. ACF (left) and G-ACF (right) for the second chain for $m = 5$. (Top) $n = 10^3$  and (bottom) $n =  10^4$. The red line is the true ACF.}
\label{fig:var-acf}
\end{figure}

Since $\mu$ is known, we assess the performance of A-SV and G-SV estimatiors by assessing coverage probabilities of $95\%$ Wald confidence regions over 1000 replications.  Table~\ref{tab:var-coverage} shows that irrespective of sample size, $\hat{\Sigma}_G$ results in close to nominal coverage probability, whereas $\hat{\Sigma}_A$ yields critically low coverage. The low coverage is a consequence of underestimating the autocovariances. Only at sample size $n=10^5$ does $\hat{\Sigma}_A$ yield close to nominal coverage.

\begin{table}[h]
    \centering
    \small
    \begin{tabular}{|c|ccccc|}
    \hline
$n$  &  1000  & 5000  & 10000  & 50000  & 100000 \\  \hline
A-SV  &    0.710 & 0.843 & .885 & .928 & .944 \\ 
G-SV  &    0.956 & 0.937 & .924 & .945 & .952 \\ \hline
    \end{tabular}
    \caption{VAR. Coverage probabilities at $95 \%$ nominal level. Replications $= 1000$.}
    \label{tab:var-coverage}
\end{table}

The quality of estimation of $\Sigma$ and $\text{ESS}$ is assessed by running plots from 50 replications of run length 50000. 
 In Figure~\ref{fig:var-frob_n_ess}, we present running plots of $\log(\|\hat{\Sigma}\|_F)$ for both $\hat{\Sigma}_G$ and $\hat{\Sigma}_A$ and the running plots of $\log(\widehat{\text{ESS}})/mn$ for both  $\widehat{\text{ESS}}_G$ and $\widehat{\text{ESS}}_A$. It is evident that $\hat{\Sigma}_A$ severely underestimates the truth, leading to an overestimation of ESS. G-SV estimates $\Sigma$ more accurately early on, safeguarding against early termination using ESS. 

\begin{figure}[h]
    \centering
      \includegraphics[width = 2.4in]{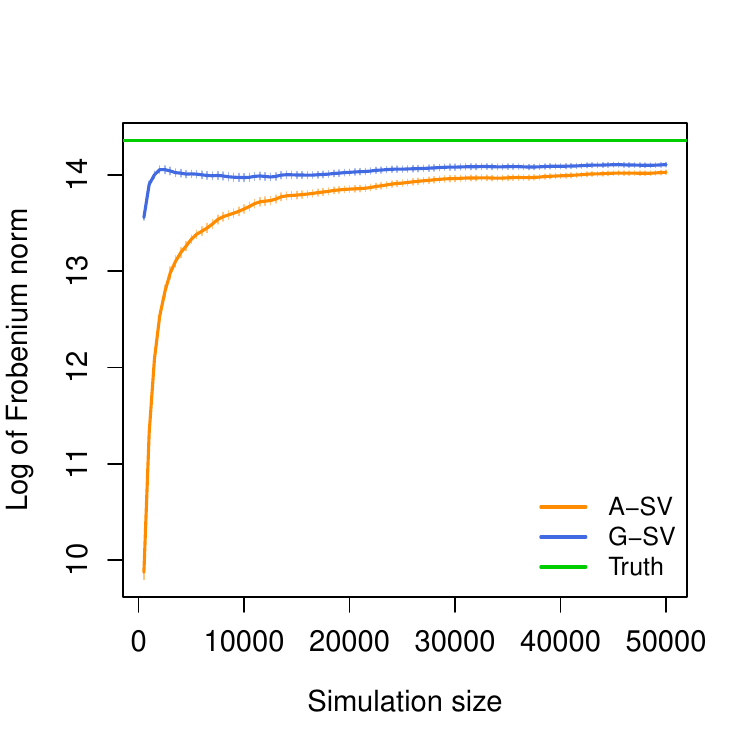} 
      \includegraphics[width = 2.4in]{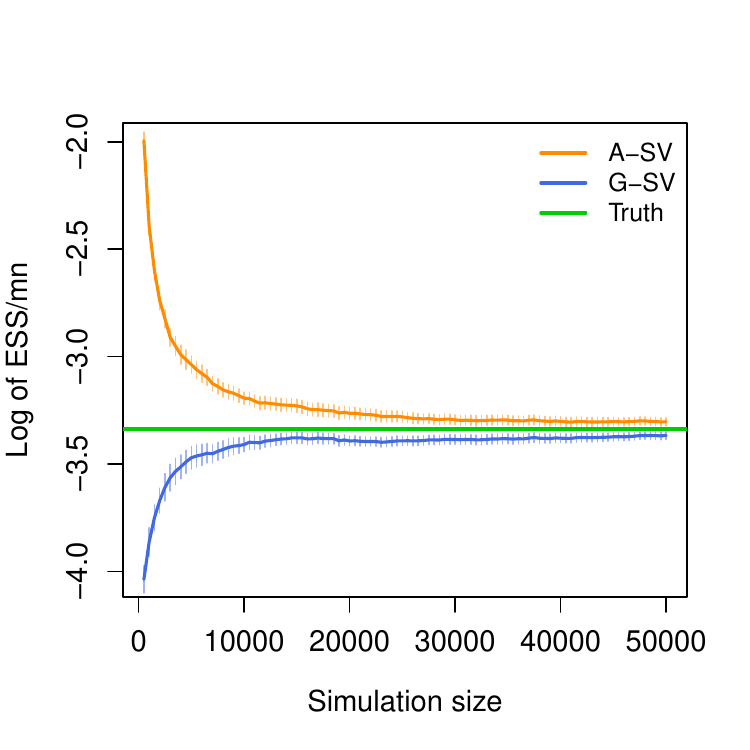}         
    \caption{VAR. (Left) Running plot for logarithm of Frobenius norm of A-SV and G-SV estimator. (Right) Running plot for logarithm of $\widehat{\textrm{ESS}}/mn$ using A-SV and G-SV estimator.}
    \label{fig:var-frob_n_ess}
\end{figure}

\subsection{Boomerang target distribution} \label{ex:boomerang}

Consider the following family of bimodal bivariate distributions introduced by \cite{gelman1991note}, which we term as a boomerang distribution. For $A \geq 0$ and $B, C  \in \mathbb{R}$, the target density is
\[
f(x, y) \propto \exp\left(-\dfrac{1}{2} \left[Ax^2y^2 + x^2 + y^2 -2Bxy  -2Cx - 2Cy  \right]\right)\,.
\]
We run a deterministic scan Gibbs sampler using the following full conditional densities:
\begin{align*}
    x \mid y &\sim N\left(\dfrac{By + C}{Ay^2 + 1}, \dfrac{1}{Ay^2 + 1}\right)\\
    y \mid x &\sim N\left(\dfrac{Bx + C}{Ax^2 + 1}, \dfrac{1}{Ax^2 + 1}\right)\,.
\end{align*}

We consider two settings; in setting 1,
$A = 1,\; B = 3,\; C = 8$ which results in well-separated modes. In setting 2, we let $A = 1, \; B = 10, \; C=7$ which yields a boomerang shape for the contours. The contour plots for these two settings are in the left in Figure~\ref{fig:boom-2D}, overlaid with scatter plots of two parallel runs of the Gibbs sampler. Setting 2 is chosen specifically to illustrate that the locally and globally-centered ACvFs perform similarly when the Markov chain moves freely in the state space. 

\begin{figure}[h]
    \centering
    \includegraphics[width = .90\textwidth]{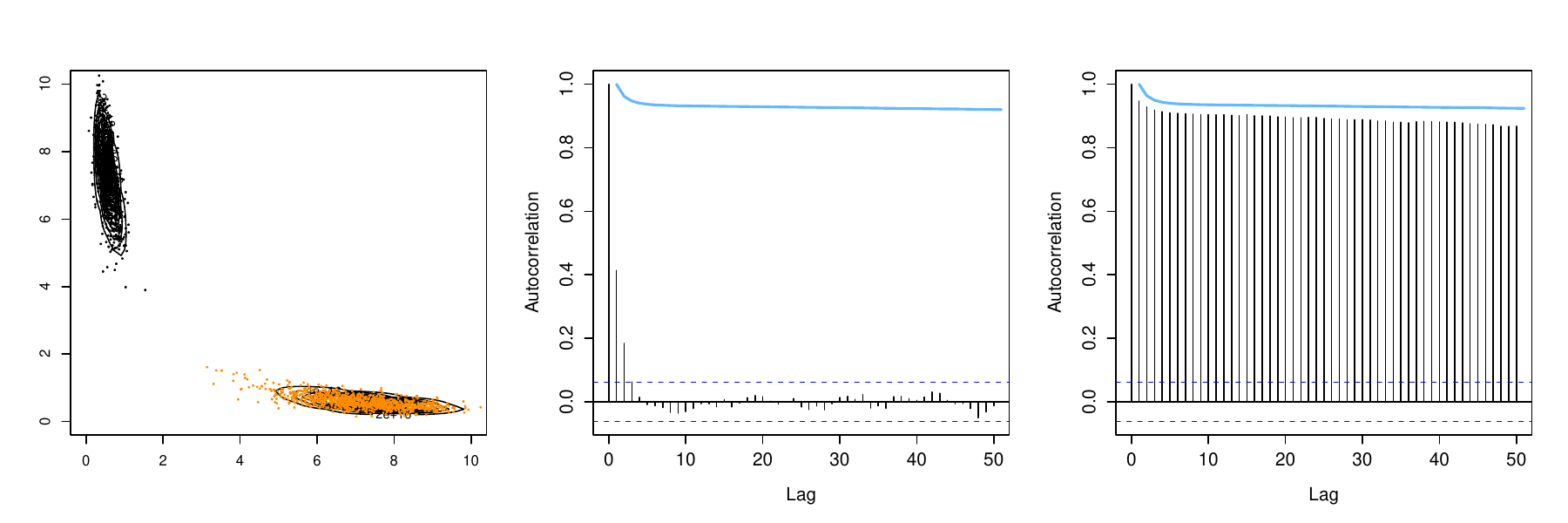}\\
    \includegraphics[width = .90\textwidth]{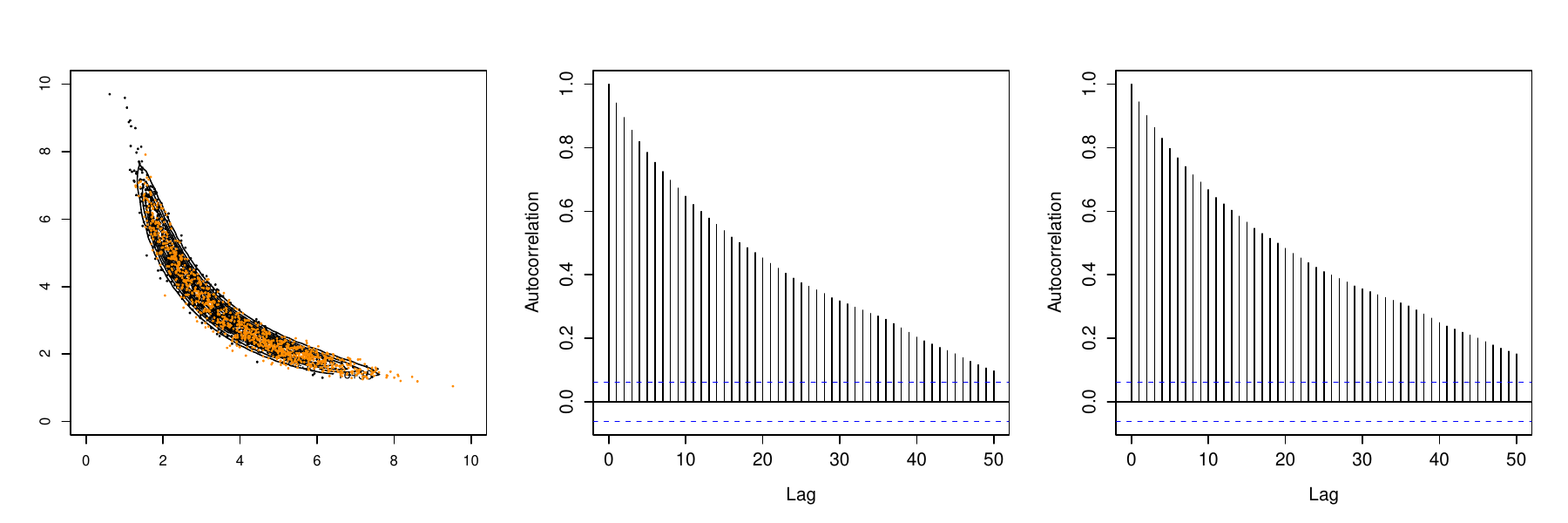}    
    \caption{Boomerang. Top -  Setting 1; bottom -  Setting 2. Contour plots of the target distributions  overlaid with scatter plots for two chains.  Locally-centered ACF (middle) and G-ACF (right) plots for one chain for $n = 10^3$ with the blue line being the ACF at $n = 10^4$.  }
   \label{fig:boom-2D}
\end{figure}

We run $m=5$ parallel Markov chains with starting points evenly distributed across the state space. For setting 1, when the Markov chains have not been able to jump modes, locally-centered ACF severely underestimates autocorrelation. This is seen in the top-middle plot of Figure \ref{fig:boom-2D}, where the locally-centered autocorrelations at $n=1000$ are drastically different from the locally-centered autocorrelations at $n = 10^4$. Somewhere between $n = 1000$ and $n = 10^4$, the Markov chains jump modes and it is only then that the locally-centered ACFs provide better estimates. The G-ACFs,  on the other hand, produce similar ACF estimates at $n = 1000$ and $n = 10^4$ by measuring deviations about the global mean. For setting 2, at $n = 1000$, both methods yield similar ACFs  reinforcing our claim that there is much to gain by using G-ACvF and nothing to lose.

\begin{table}[h]
\parbox{.45\linewidth}{
\centering
\small
\begin{tabular}{|c|c|c|c|c|}
\hline
 $n$ & \multicolumn{2}{|c|}{$m = 2$} & \multicolumn{2}{|c|}{$m=5$}\\
 \hline
 & A-SV & G-SV & A-SV & G-SV \\
 \hline
 5000 & 0.595 &  0.700 & 0.402 &  0.640\\
 10000 & 0.563 &  0.665 & 0.59 &  0.739\\
 50000 & 0.775 &  0.814 & 0.807 &  0.864\\
 100000 & 0.847 &  0.864 & 0.884 &  0.902\\
\hline
\end{tabular}
\caption{Setting 1: Coverage probabilities from $10^3$ replications.}
\label{tab:boom-coverage_1}
}
\hfill
\parbox{.45\linewidth}{
\centering
\small
\begin{tabular}{|c|c|c|c|c|}
 \hline
 $n$ & \multicolumn{2}{|c|}{$m = 2$} & \multicolumn{2}{|c|}{$m=5$}\\
 \hline
 & A-SV & G-SV & A-SV & G-SV \\
 \hline
 1000 &  0.856 &  0.868 & 0.895 &  0.910\\
 5000 & 0.921 &  0.925 & 0.910 &  0.915\\
 10000 & 0.928 &  0.93 & 0.919 &  0.926\\
 50000 & 0.943 &  0.944 & 0.951 &  0.952\\
\hline
\end{tabular}
\caption{Setting 2: Coverage probabilities from $10^3$ replications.}
\label{tab:boom-coverage_2}
}
\end{table}

The true mean of the target distribution can be obtained using numerical approximation. Using  the A-SV and G-SV estimators, we construct $95\%$ confidence regions and report coverage probabilities for 1000 replications for both $m=2$ and $m=5$. Tables~\ref{tab:boom-coverage_1} and~\ref{tab:boom-coverage_2} report all  results. In setting 1, systematically, the G-SV estimator yields far superior coverage than the A-SV estimator for all values of $n$. Whereas for setting 2, the results are almost similar indicating the equivalence of A-SV and G-SV estimator for fast mixing Markov chains.

In Figure~\ref{fig:boom-ess} we present running plots of estimates of $\log(\textrm{ESS})/mn$ for both setting 1 and setting 2. As the sample size increases, $\widehat{\text{ESS}}_G$ and  $\widehat{\text{ESS}}_A$ become closer, but early on for setting 1, $\widehat{\text{ESS}}_G$ estimates are much smaller, safeguarding users against early termination.

\begin{figure}[h]
    \centering
      \includegraphics[width = .40\textwidth]{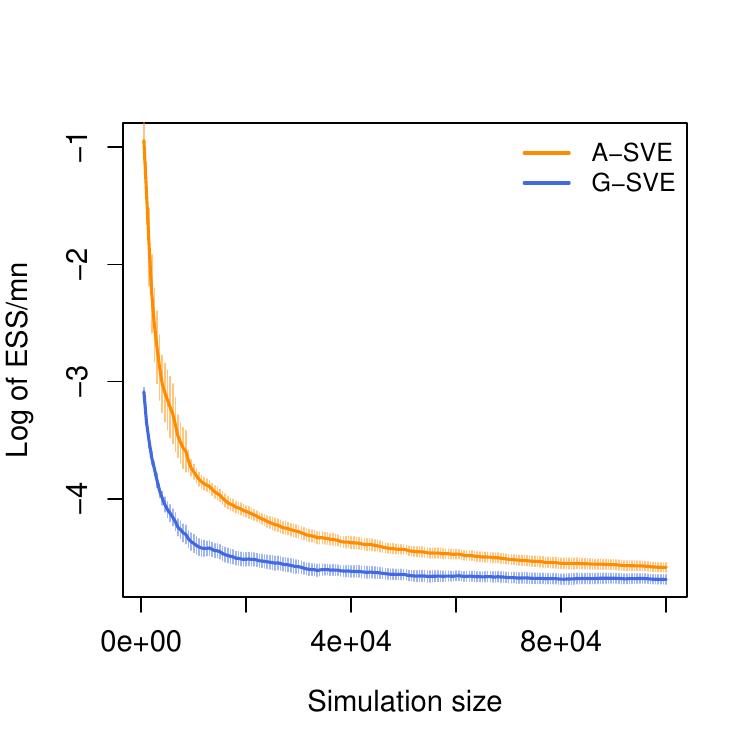}
      \includegraphics[width = .40\textwidth]{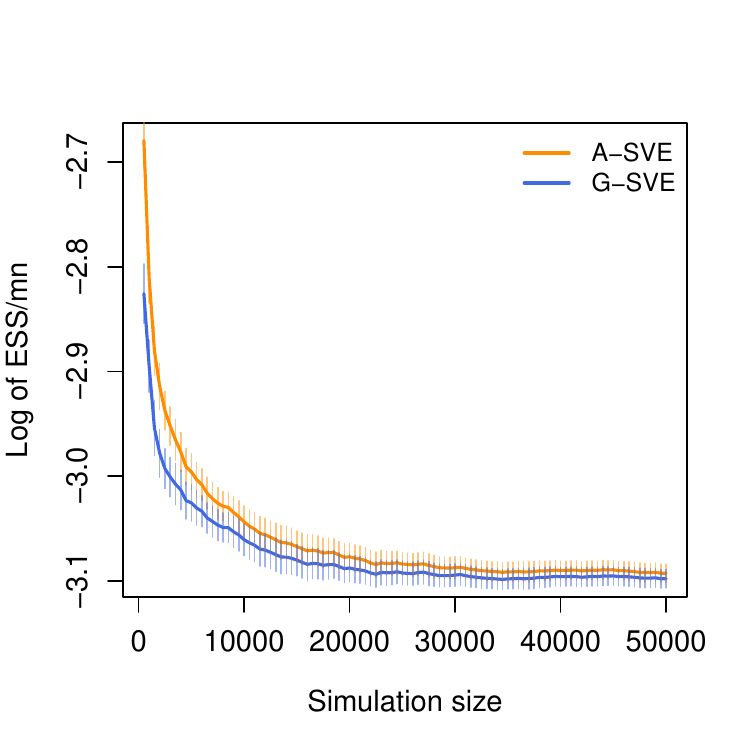}          

    \caption{Boomerang. Running plot of $\log (\widehat{\text{ESS}}_A )/mn$ and $\log (\widehat{\text{ESS}}_G )/mn$ with $m = 5$ for setting 1 (left) and setting 2 (right).}
    \label{fig:boom-ess}
\end{figure}

\subsection{Sensor network localization}

Consider the sensor network localization problem of \cite{ihler2005nonparametric} where the goal is to identify unknown sensor locations using noisy distance data. We use the data and the model from \cite{tak2018repelling}. There are four sensors scattered on a planar region where $x_i = (x_{i1}, x_{i2})^T$ denotes the coordinates of the $i${th} sensor. Let $y_{ij}$ denote the distance between  sensors $x_i$ and $x_j$ and if observed, yields $z_{ij} = 1$ otherwise, $z_{ij} = 0$. The complete model is,
\begin{align*}
    z_{ij} \mid x_1, ..., x_4 & \sim \text{Bernoulli}\left(\exp\left(\dfrac{-\|x_i - x_j\|^2}{2R^2}\right)\right)\\
    y_{ij} \mid z_{ij} = 1, x_i,x_j &\sim N \left(\|x_i - x_j\|^2, \sigma^2 \right)\,.
\end{align*}
\cite{tak2018repelling} set $R = 0.3$ and $\sigma = 0.02$ and use independent $N(0, 100I_2)$ priors on the locations. Distance $y_{ij}$ is specified only if $z_{ij} = 1$. The $8$-dimensional posterior of $(x_1, x_2, x_3, x_4)$ is intractable with unknown full conditionals. A Metropolis-within-Gibbs type sampler is implemented with each full conditional employing the repelling attractive Metropolis (RAM) algorithm of \cite{tak2018repelling}. The RAM algorithm runs Markov chains with higher jumping frequency between the modes.

We run $m = 5$ parallel Markov chains with well-separated starting points. Coverage probabilities are not estimable since the true posterior mean is unknown. Trace plot of $x_{11}$ for two Markov chains is shown in Figure~\ref{fig:sensor-trace}. Figure~\ref{fig:sensor-trace} also  plots locally and globally-centered ACFs. As seen in previous examples, early estimates of locally-centered ACFs are much smaller than later estimates. The G-ACFs, on the other hand, are similar at small and large sample sizes. Figure~\ref{fig:sensor-frob_n_ess} presents the running plot of $\log \|{\Sigma}\|_F$ and $\log \textrm{ESS}/mn$ using A-SV and G-SV  estimators. In both the plots,the  G-SV estimator and $\widehat{\textrm{ESS}}_G/mn$ reach stability significantly earlier than A-SV estimator and $\widehat{\textrm{ESS}}_A/mn$.

\begin{figure}[h]
    \centering
    \includegraphics[width = .32\textwidth]{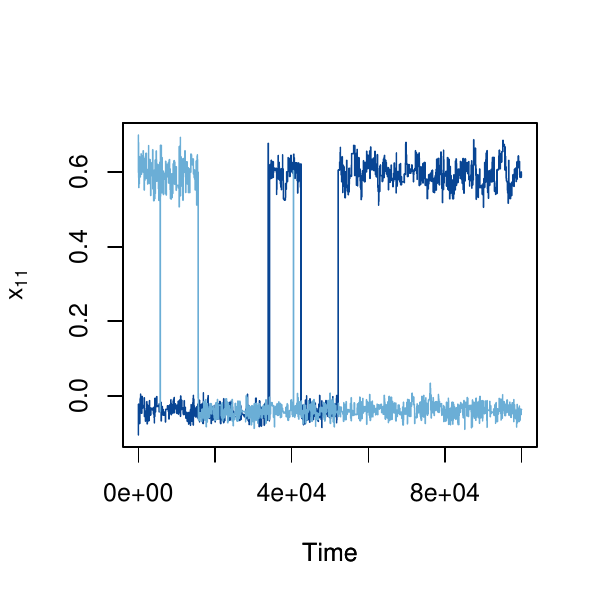}
      \includegraphics[width = .32\textwidth]{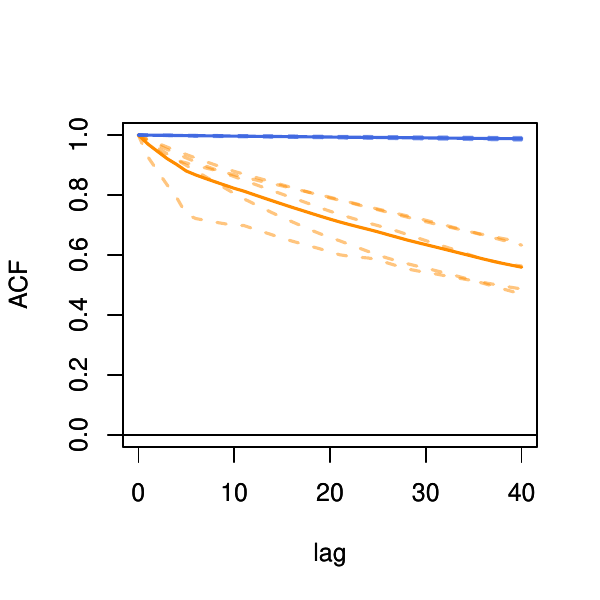}
      \includegraphics[width = .32\textwidth]{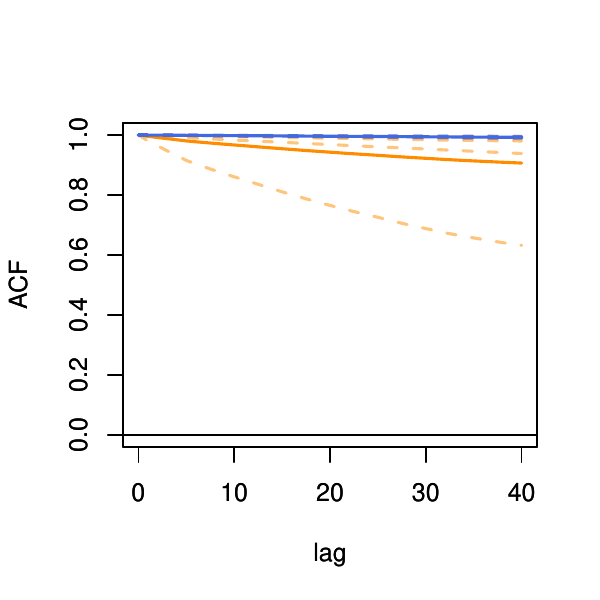}    
    \caption{Sensor: Trace plot of $x_{11}$ (left) for two parallel chains. Average locally-centered ACF (solid orange) and G-ACF (solid blue) at $n = 5000$ (middle) and $n = 50000$ (right). Dashed lines are individual  chain estimates.}
    \label{fig:sensor-trace}
\end{figure}

\begin{figure}[h]
    \centering
      \includegraphics[width = .33\textwidth]{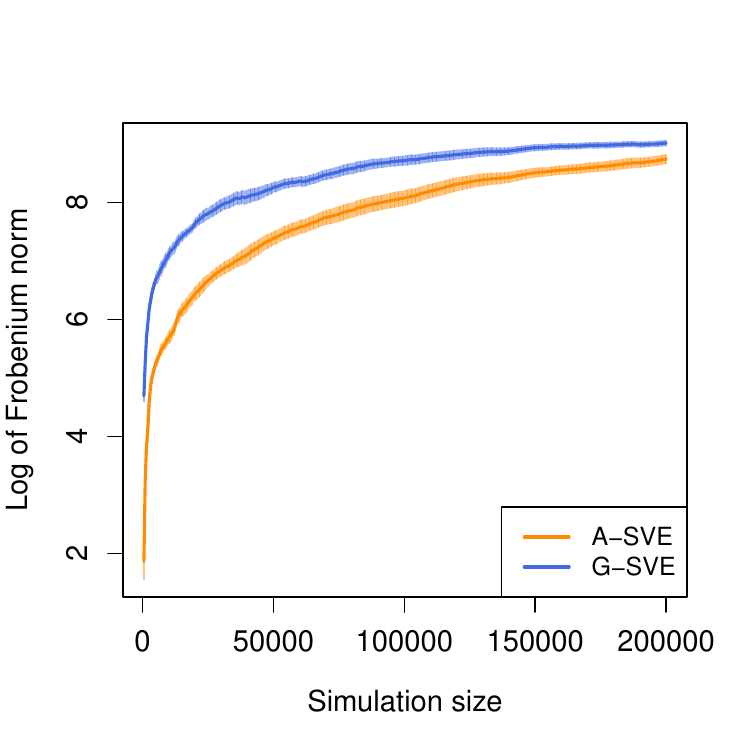} 
      \includegraphics[width = .33\textwidth]{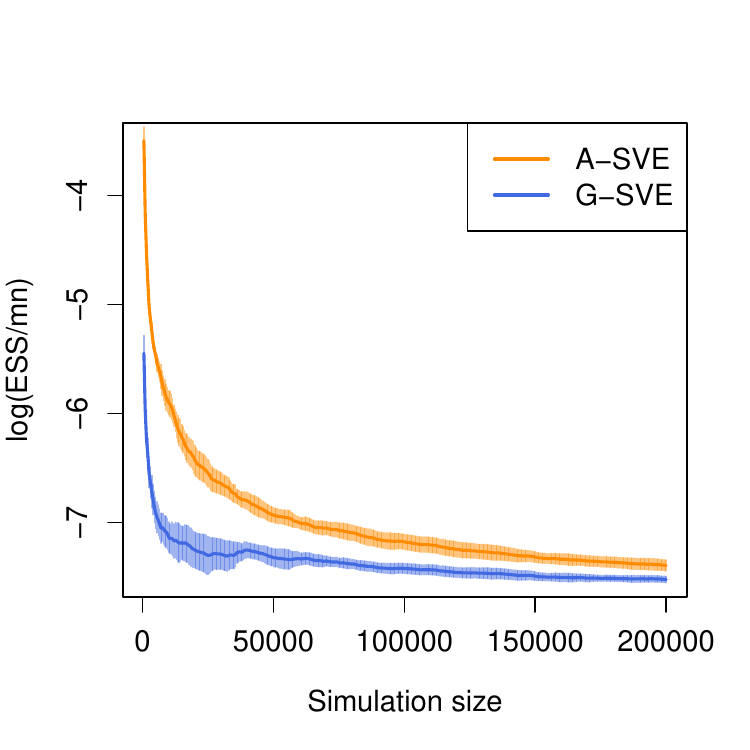}

    \caption{Sensor: Running plot of $\log (\|{\Sigma}\|_F)$ (left) and $\log ({\textrm{ESS}})/mn$ (right) estimated using A-SV and G-SV along with standard errors for 10 replications.}
    \label{fig:sensor-frob_n_ess}
\end{figure}
 
\section{Discussion} \label{sec:discussion}

For slow-mixing Markov chains, a naive average of locally-centered ACvFs can dramatically underestimate the truth. This has a severe impact on ACF plots, Monte Carlo variance, and stopping time of MCMC algorithms. We provide a globally-centered estimate of the ACvF that leads to improvements in all three aspects of MCMC output analysis. 

Another class of estimators of $\Sigma$, for reversible Markov chains, are the multivariate initial sequence (mIS) estimators \citep{dai:jon:2017,koso:2000}. Similar to SV estimators, ACvFs are a critical part of mIS estimators and underestimation in ACvFs yields underestimates of $\Sigma$. It is easy to show that $\hat{\Upsilon}_{G,s}(k)$ is a strongly consistent estimators for $\Upsilon(k)$. Replacing $\hat{\Upsilon}_s(k)$ for $\hat{\Upsilon}_{G,s}(k)$ and following \cite[Theorem~2]{dai:jon:2017} will yield consistent overestimation of the generalized variance , $\text{det}({\Sigma})$ resulting in a globally-centered version of the mIS estimator.  A valuable line of research would be to study the theoretical and empirical properties of the globally-centered mIS estimators. One disadvantage of using G-ACvFs in mIS and in general, is that in a parallel and automated setup, G-ACvFs would require more communication between cores to obtain the global mean. This may increase computation time considerably. For simulations parallelized via popular softwares like JAGS, WinBUGS, Rstan, automatic checks for simulation termination are typically not built-in. Here, using G-ACvFs adds minimal computational burden. 

All examples presented here are in low to medium dimensions. Estimating $\Sigma$ for large $p$ is a challenging problem. In the supplement, we add a simulation study for estimating $\Sigma$ in a high-dimensional VAR model. We find that, although G-SV consistently performs better than A-SV, both require large Monte Carlo sample sizes to estimate $\Sigma$ well. This is particularly true for slow-mixing Markov chains. Estimating $\Sigma$ accurately for high-dimensional slow-mixing MCMC algorithms is a critical area of future work.

\section{Acknowledgements} 
\label{sec:acknowledgements}
The authors are thankful to Haema Nilakanta for sharing useful \texttt{R} code.

\appendix
\section{Theoretical results}  \label{sec:appendix}
\subsection{Preliminaries} \label{apdx:preliminaries}

Let $B^{(i)}$ denote the $i$th component of the $p$-dimensional standard Brownian motion $B$.
\begin{lemma}
\label{lemma: brownian}
(\cite{csorgo1981strong}). Let $b_n$ be a sequence such that $b_n$ and $n/b_n \to \infty$ as $n \to \infty$. Then for all $\epsilon > 0$ and for almost all sample paths, there exists $n_{0}\left(\epsilon\right)$ such that $\forall n\geq n_{0}(\epsilon)$
\begin{align*}
 \sup_{0\leq t \leq n-b_n}\sup_{0 \leq s \leq b_n} \left| B^{\left(i\right)}\left(t+s\right) - B^{\left(i\right)}\left(t\right) \right| & < \left(1+ \epsilon\right)\left(2b_n\left(\log\dfrac{n}{b_n} + \log\; \log\; n\right)\right)^{1/2} , \\ 
  \sup_{0 \leq s \leq b_n} \left|B^{\left(i\right)}\left(n\right) - B^{\left(i\right)}\left(n - s\right)\right|& < \left(1+ \epsilon\right)\left(2b_n\left(\log\dfrac{n}{b_n} + \log\;\log\;n\right)\right)^{1/2} , \; \textrm{and} \\ 
  \left|B^{\left(i\right)}\left(n\right)\right|& < \left(1+\epsilon\right)\sqrt{2n\;\log \log n} \; . 
\end{align*}
\end{lemma}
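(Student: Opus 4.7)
The three displayed inequalities are classical properties of Brownian sample paths. The plan is to handle each via a discretization-and-scaling argument combined with Gaussian tail bounds from the reflection principle and the first Borel--Cantelli lemma.

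For inequality~(1), I would first partition $[0,n]$ into non-overlapping blocks $I_j=[jb_n,(j+1)b_n]$ for $j=0,1,\ldots,K_n$ with $K_n=\lceil n/b_n\rceil$. Every window $[t,t+b_n]$ with $t\leq n-b_n$ lies in $I_j\cup I_{j+1}$ for some $j$, so
\[
\sup_{0\leq t\leq n-b_n}\sup_{0\leq s\leq b_n}\bigl|B^{(i)}(t+s)-B^{(i)}(t)\bigr| \;\leq\; 2\max_{0\leq j\leq K_n}\sup_{0\leq u\leq 2b_n}\bigl|B^{(i)}(jb_n+u)-B^{(i)}(jb_n)\bigr|.
\]
Brownian scaling rewrites each inner sup as $\sqrt{2b_n}\sup_{0\leq v\leq 1}|W(v)|$ in distribution, and the reflection principle gives $P(\sup_{v\in[0,1]}|W(v)|>y)\leq 4e^{-y^2/2}$. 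With $L_n=\log(n/b_n)+\log\log n$, choosing a threshold near $(1+\epsilon)\sqrt{2b_n L_n}$ and a union bound over the $O(n/b_n)$ blocks yields a total probability that is summable along a geometric subsequence $n_k=\lfloor c^k\rfloor$; Borel--Cantelli delivers the almost-sure bound along $n_k$, and a monotonicity argument interpolates to all $n$.

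Inequality~(2) is the analogous bound for increments ending at $n$ and reduces to~(1) after time-reversal of the Brownian motion on $[0,n]$. Inequality~(3) is the Khintchine law of the iterated logarithm for $B^{(i)}(n)$ at integer times, which I would invoke directly.

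The main obstacle is tracking the constant $(1+\epsilon)$ sharply. A one-step discretization already loses multiplicative factors --- replacing $[t,t+b_n]$ by a block of length $2b_n$ doubles the variance, and bounding the continuous sup by twice a block maximum inflates the threshold further --- so the naive union bound need not be summable for small $\epsilon$. To preserve the stated constant one must replace the one-step discretization with a dyadic chaining argument: refine the mesh repeatedly, control each layer by a Gaussian tail on a smaller scale, and sum the geometric series of errors. Interpolating from the subsequence $n_k$ to all $n$ additionally requires $n_{k+1}/n_k\to 1$ so the induced loss can be absorbed into $\epsilon$.
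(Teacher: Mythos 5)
The paper does not prove this lemma at all: it is imported verbatim from \cite{csorgo1981strong} (it is essentially their Theorem 1.2.1 on increments of the Wiener process together with the upper half of Khintchine's law of the iterated logarithm), so there is no internal proof to compare your argument against. Judged on its own terms, your sketch follows the classical route --- discretization, Brownian scaling, reflection-principle tail bounds, a union bound, Borel--Cantelli along a geometric subsequence, and the LIL for the third display --- which is indeed how the cited result is proved.

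Two steps, however, are genuinely incomplete as written. First, for inequality (1) your block decomposition into windows of length $2b_n$ with the triangle-inequality factor $2$ cannot recover the constant $(1+\epsilon)$, as you acknowledge; but ``replace it with dyadic chaining'' is only a gesture, and chaining is not the standard repair. The fix in Cs\"org\H{o}--R\'ev\'esz is a fine grid of mesh $\delta b_n$ in $t$ combined with windows of length $(1+\delta)b_n$: each grid point contributes a tail of order $\exp\{-(1+\epsilon)^2 L_n/(1+\delta)\}$ with $L_n=\log(n/b_n)+\log\log n$, so for $\delta$ small relative to $\epsilon$ the union over $O\left(n/(\delta b_n)\right)$ points is summable along the subsequence, and the loss from interpolating to all $n$ (which also needs some monotonicity of $b_n$ and $n/b_n$, a hypothesis present in the cited theorem) is absorbed into $\epsilon$. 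Without spelling out one of these refinements the threshold you obtain is off by a fixed multiplicative factor, not by $(1+\epsilon)$. Second, the time-reversal reduction of (2) to (1) does not work as stated: the reversed process $\tilde B_n(u)=B^{(i)}(n)-B^{(i)}(n-u)$ depends on $n$, so an almost-sure ``for all $n\geq n_0(\epsilon)$'' statement for $B^{(i)}$ does not transfer to the family $\{\tilde B_n\}$; inequality (2) needs its own direct tail bound (a single window of length $b_n$ at the right endpoint) followed by Borel--Cantelli, which is how the cited source handles it. With these two repairs your outline becomes the standard proof of the quoted lemma.
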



In order to generalize summations for positive and negative lags in the following proofs, we define the sets $I_k, J_{k1} \text{ and } J_{k2}$ as $ I_k := \{1, \dots, n-k\}, J_{k1} := \{n-k+1, \dots , n\}, J_{k2} := \{1, \dots, k\}$ for $k > 0$ and  $I_k := \{1-k, \dots, n\}, J_{k1} := \{1, \dots , -k\}, J_{k2} := \{n+k+1, \dots, n\}$ for $k < 0$. For $k=0$, $J_{k1}$ and $J_{k2}$ are empty sets and $I_k := \{1, \dots, n\}$. Notice that the empirical autocovariance estimator at lag-$k$ requires a summation over $I_k$. 

\subsection{Proof of Theorem~\ref{th:G-ACF_bias}} 
\label{appendix:bias}


We can break $\hat{\Gamma}_{G,s}$ into four parts for all $k \geq 1$ as:
\begin{align*}
\hat{\Gamma}_{G,s}(k) & = \dfrac{1}{n}\sum_{t \in I_k} \left(X_{s,t} - \bar{\bar{X}} \right) \left(X_{s,t+k} - \bar{\bar{X}} \right)^T \\
    &= \left[\dfrac{1}{n} \sum_{t \in I_k} \left(X_{s,t} - \bar{X}_s \right) \left(X_{s,t+k} - \bar{X}_s \right)^T \right] + \left[\dfrac{1}{n} \sum_{t \in J_{k1}}  \left( \bar{X}_s - X_{s,t} \right)  \left(\bar{X}_s - \bar{\bar{X}} \right)^T\right]\\ 
    & \quad + \left[\dfrac{1}{n} \sum_{t \in J_{k2}} \left(\bar{X}_s - \bar{\bar{X}} \right)  \left(\bar{X}_s - X_{s,t} \right)^T\right]   + \left[\dfrac{n-\abs{k}}{n} \left(\bar{X}_s - \bar{\bar{X}}\right)   \left(\bar{X}_s - \bar{\bar{X}} \right)^T\right]\\
    &= \hat{\Gamma}_s(k) - \dfrac{1}{n}\sum_{t \in J_{k1}}A_{s,t} - \dfrac{1}{n} \sum_{t \in J_{k2}}A_{s,t}^T  + \dfrac{n-\abs{k}}{n}  \left(\bar{X}_s - \bar{\bar{X}} \right)  \left(\bar{X}_s - \bar{\bar{X}} \right)^T\,, \numberthis \label{eq:acf_breakdown}
\end{align*}
where $A_{s,t} = (X_{s,t}-\bar{X}_s)(\bar{X}_s - \bar{\bar{X}})^T$. Under the assumption of stationarity, we will study the expectations of each of the above terms. Without loss of generality, consider $A_{1,1}$,
\begin{align*}
    &\E \left[A_{1,1} \right]\\
     &= \mathbb{E} \left[ \left(X_{1,1} - \bar{X}_1 \right) \left(\bar{X}_1 - \bar{\bar{X}} \right)^T \right]\\
    &= \mathbb{E} \left[X_{1,1}\bar{X}_1^T \right] - \dfrac{1}{m}\mathbb{E} \left[X_{1,1}\bar{X}_1^T \right] - \dfrac{m-1}{m}\mathbb{E} \left[X_{1,1}\bar{X}_2^T \right] + \dfrac{1}{m}\mathbb{E} \left[\bar{X}_1\bar{X}_1^T \right] + \dfrac{m-1}{m}\mathbb{E} \left[\bar{X}_1\bar{X}_2^T \right] - \mathbb{E} \left[\bar{X}_1\bar{X}_1^T \right]\\
    &= \dfrac{m-1}{m}\left(\mathbb{E} \left[X_{1,1}\bar{X}_1^T \right] - \mathbb{E} \left[X_{1,1}\bar{X}_2^T \right] + \mathbb{E} \left[\bar{X}_1\bar{X}_2^T \right] - \mathbb{E} \left[\bar{X}_1\bar{X}_1^T \right]\right)\\
    &= \dfrac{m-1}{m}\left(\dfrac{1}{n}\sum_{t=1}^{n}\mathbb{E} \left[X_{1,1}X_{1,t}^T \right] - \mathbb{E}\left[X_{1,1}\right] \mathbb{E}\left[\bar{X}_2^T \right] + \mathbb{E} \left[\bar{X}_1\right]\mathbb{E}\left[\bar{X}_2^T\right] - \text{Var}\left[\bar{X}_1 \right] - \mathbb{E}\left[\bar{X}_1 \right]\mathbb{E}\left[\bar{X}_1^T \right]\right)\\
    &= \dfrac{m-1}{mn}\left(\sum_{k=0}^{n-1}\Gamma(k) - n\text{Var}\left[\bar{X}_1 \right]\right)\,. \numberthis \label{eq:acf_2}
\end{align*}
Similarly,
\begin{align}
\label{eq:acf_3}
    \mathbb{E} \left[ A_{1,1}^T \right] &= \mathbb{E}\left[ A_{1,1}\right]^T = \dfrac{m-1}{mn}\left(\sum_{k=0}^{n-1}\Gamma(k)^T - n\text{Var}\left[\bar{X}_1 \right] \right)\,.
\end{align}

Further,
\begin{align*}
\mathbb{E}\left[ \left(\bar{X}_{1} - \bar{\bar{X}} \right)  \left(\bar{X}_{1} - \bar{\bar{X}} \right)^{T}\right] &= \mathbb{E}\left[ \bar{X}_{1}\bar{X}_{1}^{T} - \bar{X}_{1}\bar{\bar{X}}^{T} - \bar{\bar{X}} \bar{X}_{1}^{T} + \bar{\bar{X}}\bar{\bar{X}}^{T}\right]\\
&= \left(\Var(\bar{X}_{1}) + \mu\mu^{T} - \Var(\bar{\bar{X}}) - \mu\mu^{T}\right)= \dfrac{m-1}{m}\Var(\bar{X}_1)\,. \numberthis \label{eq:acf_4}
\end{align*}
Additionally, $\hat{\Gamma}_s(k)$ exhibits the following expectation (from \cite{priestley1981spectral})
 \begin{equation} \label{eq:priestly}
     \mathbb{E}[\hat{\Gamma}_s(k)] = \left(1- \dfrac{\abs{k}}{n}\right)\left(\Gamma(k) - \Var{(\bar{X}_s)}
 \right)\,.
 \end{equation}
Using\eqref{eq:acf_2}, \eqref{eq:acf_3}, \eqref{eq:acf_4}, and  \eqref{eq:priestly} in \eqref{eq:acf_breakdown},
\begin{align*}
    & \E \left[\hat{\Gamma}_{G,s}(k) \right] \\
    &= \mathbb{E}\left[\hat{\Gamma}_{s}(k)\right] - \dfrac{1}{n} \left(\sum\limits_{t \in J_{k1}}\mathbb{E}[A_{1,t}] + \sum\limits_{t \in J_{k2}}\mathbb{E}[A_{1,t}^T] \right) + \left(1- \dfrac{\abs{k}}{n}\right)\left(1-\dfrac{1}{m}\right)\Var(\bar{X}_1)\\
    &= \mathbb{E}\left[\hat{\Gamma}_{s}(k)\right] - \dfrac{\abs{k}}{n}\left(1-\dfrac{1}{m}\right)\left(\dfrac{1}{n}\sum_{h=0}^{n-1}\Gamma(h) + \dfrac{1}{n}\sum_{h=0}^{n-1}\Gamma(h)^T - 2 \Var(\bar{X}_1)\right) + \left(1- \dfrac{\abs{k}}{n}\right)\left(1-\dfrac{1}{m}\right)\Var(\bar{X}_1)\\
   &= \left(1- \dfrac{\abs{k}}{n}\right)\left( \Gamma(k) - \dfrac{\Var (\bar{X}_1)}{m} \right) + o(n^{-1})
\end{align*}

By \cite[Proposition 1]{song1995optimal},
\[
\Var(\bar{X}_s) = \dfrac{\Sigma}{n} + \dfrac{\Phi}{n^2} + o(n^{-2})\,.
\]
As a consequence, we get the result 
\[
 \E \left[\hat{\Gamma}_{G,s}(k) \right] =  \left(1- \dfrac{\abs{k}}{n}\right)\left( \Gamma(k) - \dfrac{\Sigma}{mn} - \dfrac{\Phi}{mn^2} \right) + o(n^{-1})\,.
\]

We can also compute the following difference in expectation:
\begin{align*}
  & \E \left[\hat{\Gamma}_{G,s}(k) \right] -  \E \left[\hat{\Gamma}_{s}(k) \right] \\
  &= (1-m^{-1}) \left(1 - \dfrac{\abs{k}}{n}\right)\Var(\bar{X}_1)  - (1 - m^{-1}) \dfrac{\abs{k}}{n}\left(\dfrac{1}{n}\sum_{h=0}^{n-1}\Gamma(h) + \dfrac{1}{n}\sum_{h=0}^{n-1}\Gamma(h)^T -  2\Var(\bar{X}_1)\right) \\
  &= (1-m^{-1})\left[ \Var(\bar{X}_1)  -  \dfrac{\abs{k}}{n}\left(\dfrac{1}{n}\sum_{h=0}^{n-1}\Gamma(h) + \dfrac{1}{n}\sum_{h=0}^{n-1}\Gamma(h)^T -  \Var(\bar{X}_1)\right)\right] \\
  &= (1-m^{-1})\left[\dfrac{\Sigma}{n} + \dfrac{\abs{k}}{n^2}\left(\Sigma - \sum_{h=0}^{n-1}\Gamma(h) -\sum_{h=0}^{n-1}\Gamma(h)^T \right) + \left(1+\dfrac{\abs{k}}{n}\right)\dfrac{\Phi}{n^2}\right] + o(n^{-2})\,.
\end{align*}

Although this completes the proof of Theorem~\ref{th:G-ACF_bias}, we will require the following decomposition
 \begin{equation} \label{eq:G-ACF_expec_breakdown}
     \mathbb{E}\left[\hat{\Gamma}_{G,s}(k)\right] = \left(1- \dfrac{\abs{k}}{n}\right)\Gamma(k) + O_1 + O_2\,.
 \end{equation}
where,
\begin{align*}
    O_1 &= -\dfrac{\abs{k}}{n}\left[\left(1-\dfrac{1}{m}\right)\left(\dfrac{1}{n}\sum_{h=0}^{n-1}\Gamma(h)^T + \dfrac{1}{n}\sum_{h=0}^{n-1}\Gamma(h)\right) - \left(2-\dfrac{1}{m}\right) \left(\dfrac{\Sigma}{n} + \dfrac{\Phi}{n^2}\right)\right] + o(n^{-2})\, ,\\
    O_2 &= -\dfrac{1}{m}\left(\dfrac{\Sigma}{n} + \dfrac{\Phi}{n^2}\right) + o(n^{-2})\,.
\end{align*}

\subsection{Strong consistency of the G-SV estimator} \label{appendix:strong_consis}

Consider pseudo autocovariance and spectral variance estimators for the $s$th chain, denoted by $\tilde{\Upsilon}_s(k)$ and $\tilde{\Sigma}_s$ that use data centered around the unobserved actual mean $\mu_g$:
\begin{align*}
    \tilde{\Upsilon}_s(k) &= \dfrac{1}{n}\sum_{t \in I_k}(Y_{s,t}-\mu_g)(Y_{s,t+k}-\mu_g)^T  \quad \text{and} \quad 
    \tilde{\Sigma}_s = \sum_{k=-b_n+1}^{b_n-1}w\left(\dfrac{k}{b_n}\right)\tilde{\Upsilon}_s(k) \,.
\end{align*}
The average pseudo spectral variance estimator is $\tilde{\Sigma}_A = m^{-1}\sum_{s=1}^{m}\tilde{\Sigma}_s$
%
Further, let
\begin{align*}
  M_1 & = \dfrac{1}{m}\sum\limits_{s=1}^{m}\left\{\sum\limits_{k=-b_n+1}^{b_n-1}w\left(\dfrac{k}{b_n}\right)\sum\limits_{t \in I_k}\dfrac{1}{n}\left[ \left(Y_{s,t}-\mu_g \right)_i   \left(\mu_g-\bar{\bar{Y}} \right)_j +    \left(\mu_g-\bar{\bar{Y}} \right)_i  \left(Y_{s,t+k}-\mu_g \right)_j \right]\right\}\,, \\ 
M_2 &= \left(\mu_g-\bar{\bar{Y}} \right)_i   \left(\mu_g-\bar{\bar{Y}} \right)_j\sum\limits_{k=-b_n+1}^{b_n-1}\left(1-\dfrac{\abs{k}}{n}\right)w\left(\dfrac{k}{b_n}\right)\,.
\end{align*}

\begin{lemma} \label{lemma:G-SVE_breakdown}
Let Assumption~\ref{ass:lag_window} holds, then for the G-SV estimator, $\hat{\Sigma}_{G}^{ij} = \tilde{\Sigma}_A^{ij} + M_1 + M_2$ and 
\[
|M_1 + M_2| \leq D^2 g_1(n) + D g_2(n) + g_3(n)\,,
\]
where for some constant $C$,
\begin{align*}
    g_1(n) &= (4c+C)\dfrac{b_n \psi^2(n)}{n^2} - 4c\dfrac{\psi^2(n)}{n^2} \to 0\\
    g_2(n) &= 2\sqrt{2}\|L\|p^{1/2}(1+\epsilon)\left[(4c+C)\dfrac{b_n\psi(n)\sqrt{n\log \log n}}{n^2} - 4c\dfrac{\psi(n)\sqrt{n\log \log n}}{n^2}\right] \to 0\\
    g_3(n) &= \|L\|^2 p (1+\epsilon)^2\left[(4c+C)\dfrac{b_n \log\log n}{n} - 4c \dfrac{\log \log n}{n}\right] \to 0 \quad \text{as }n \to \infty\,.
\end{align*}
\end{lemma}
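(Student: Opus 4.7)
The plan is to first obtain the algebraic decomposition $\hat{\Sigma}_G^{ij} = \tilde{\Sigma}_A^{ij} + M_1 + M_2$ by expanding around $\mu_g$, and then to use the strong invariance principle of Theorem~\ref{thm:kuelbs} together with Lemma~\ref{lemma: brownian} to control $\bar{\bar{Y}} - \mu_g$ and the boundary partial sums appearing in $M_1$.

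For the decomposition, write $\bar{\bar{Y}} = \mu_g + (\bar{\bar{Y}} - \mu_g)$ and substitute into the definition of $\hat{\Upsilon}_{G,s}$. Expanding the product $(Y_{st} - \mu_g + \mu_g - \bar{\bar{Y}})(Y_{s(t+k)} - \mu_g + \mu_g - \bar{\bar{Y}})^T$ produces four terms: the "pure" term $(Y_{st}-\mu_g)(Y_{s(t+k)}-\mu_g)^T$, two cross terms linear in $Y$, and the quadratic term $(\mu_g - \bar{\bar{Y}})(\mu_g - \bar{\bar{Y}})^T$. After taking the $ij$ entry, averaging over $s$, multiplying by $w(k/b_n)$, and summing over $k$, the pure term yields $\tilde{\Sigma}_A^{ij}$, the linear terms yield $M_1$ (the factor $(\mu_g - \bar{\bar{Y}})$ pulls out of the inner sum since it does not depend on $s$ or $t$), and the quadratic term yields $M_2$ after noting that $|I_k| = n - |k|$.

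For the bound, let $\Delta_i = (\bar{\bar{Y}} - \mu_g)_i$. By applying Theorem~\ref{thm:kuelbs} to each chain and averaging,
\[
|\Delta_i| \;\leq\; \frac{1}{mn}\sum_{s=1}^m \bigl|(LB_s(n))_i\bigr| + \frac{D\,\psi(n)}{n},
\]
and by the third inequality of Lemma~\ref{lemma: brownian}, $|(LB_s(n))_i| \leq \|L\|\,p^{1/2}(1+\epsilon)\sqrt{2n\log\log n}$ for $n$ large, so $|\Delta_i|$ is controlled by $n^{-1}\|L\|p^{1/2}(1+\epsilon)\sqrt{2n\log\log n} + D\psi(n)/n$. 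Squaring this bound and expanding yields three groups of terms of orders $D^2\psi(n)^2/n^2$, $D\,\psi(n)\sqrt{n\log\log n}/n^2$, and $\log\log n/n$, which correspond exactly to the Brownian-squared, cross, and remainder-squared contributions that produce $g_1, g_2, g_3$.

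For $M_1$, rewrite the inner sum as $\sum_{t \in I_k}(Y_{st}-\mu_g) = \sum_{t=1}^n(Y_{st}-\mu_g) - \sum_{t \in J_{k1}}(Y_{st}-\mu_g)$ (and analogously for the other term with $J_{k2}$). The full sum contributes $-mn\Delta$, producing a $\Delta_i\Delta_j$ contribution that combines cleanly with $M_2$; the partial sums over $J_{k1}$ and $J_{k2}$ are bounded using the first two increment inequalities of Lemma~\ref{lemma: brownian}, which replace $\sqrt{2n\log\log n}$ by $(1+\epsilon)\sqrt{2b_n(\log(n/b_n) + \log\log n)}$ for windows of length at most $b_n$. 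Taking absolute values, using $|w(k/b_n)| \leq c$ and $\sum_{k=-b_n+1}^{b_n-1} 1 = 2b_n-1$, and combining with the bound on $\Delta$ yields a sum of products of a Brownian factor of order $\sqrt{b_n\log\log n}$ with $\Delta$, giving $O(b_n\log\log n/n)$ in $g_3$, and analogous cross and $D^2$ contributions for $g_2$ and $g_1$. The $M_2$ term, using the same bound on $\Delta^2$ together with $\sum_k(1-|k|/n)|w(k/b_n)| \leq 2cb_n$, contributes the remaining pieces; the constants $4c$ and the composite $(4c+C)$ emerge by summing the contributions of $M_1$ and $M_2$ separately, where $C$ absorbs the additional constants from the Brownian increment inequality on intervals of length $b_n$ and the partial-sum boundary corrections in $M_1$.

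The main obstacle will be the careful bookkeeping in $M_1$: the partial sums over $J_{k1}, J_{k2}$ are not simple multiples of $\bar{Y}_s - \mu_g$, and one must apply the appropriate Brownian increment inequality of Lemma~\ref{lemma: brownian} on windows of length $b_n$ rather than $n$, which is precisely what produces the $b_n$ factor (instead of merely $\log\log n$) in $g_1, g_2, g_3$. Once this bookkeeping is done, the claimed limits $g_i(n) \to 0$ follow from the assumption $b_n \log \log n / n \to 0$ (via Theorem~\ref{th:consistency}) together with $\psi(n) = n^{1/2 - \lambda}$ from the strong invariance principle.
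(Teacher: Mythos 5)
Your proposal is correct and takes essentially the same route as the paper: the identical expansion of $\hat{\Sigma}_G^{ij}$ around $\mu_g$ into $\tilde{\Sigma}_A^{ij} + M_1 + M_2$, the same SIP-plus-LIL bound on $\|\bar{\bar{Y}}-\mu_g\|$ via Theorem~\ref{thm:kuelbs} and the third inequality of Lemma~\ref{lemma: brownian}, the same treatment of $M_2$, and the same splitting of the $I_k$ sums in $M_1$ into full sums minus boundary sums over $J_{k1},J_{k2}$. The only (harmless) variation is that you bound those boundary sums as differences of initial partial sums plus Brownian increments controlled by the first two inequalities of Lemma~\ref{lemma: brownian}, whereas the paper applies the strong invariance principle directly at length $k$; both yield almost-sure bounds of the form $D^2 g_1(n)+Dg_2(n)+g_3(n)$ with $g_i(n)\to 0$ under $b_n\log\log n/n \to 0$, differing only in the constants absorbed into the $g_i$.
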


\begin{proof}
The proof follows from standard algebraic calculations and is presented here for completeness. Consider,
\begin{align*}
\hat{\Sigma}_{G}^{ij} &= \dfrac{1}{m}\sum_{s=1}^{m} \sum_{k=-b_n+1}^{b_n-1}w\left(\dfrac{k}{b_n}\right)\dfrac{1}{n}\sum_{t \in I_k} \left(Y_{s,t}-\bar{\bar{Y}} \right)_i \left(Y_{s,t+k}-\bar{\bar{Y}} \right)_j\\
&= \dfrac{1}{m}\sum_{s=1}^{m}\sum_{k=-b_n+1}^{b_n-1}w\left(\dfrac{k}{b_n}\right)\dfrac{1}{n}\sum_{t \in I_k}  \left[ \left(Y_{s,t}-\mu_g \right)_i  \left(Y_{s,t+k}-\mu_g \right)_j+  \left(Y_{s,t} - \mu_g \right)_i    \left(\mu_g - \bar{\bar{Y}} \right)_j \right. \\  
& \quad + \left. \left(\mu_g-\bar{\bar{Y}} \right)_i  \left(Y_{s,t+k}-\mu_g \right)_j + \left(\mu_g-\bar{\bar{Y}} \right)_i  \left(\mu_g-\bar{\bar{Y}}  \right)_j  \right]\\
& = \tilde{\Sigma}^{ij}_A + \left[(\mu_g-\bar{\bar{Y}})_i(\mu_g-\bar{\bar{Y}})_j\sum_{k=-b_n+1}^{b_n-1}\left(1-\dfrac{\abs{k}}{n}\right)w\left(\dfrac{k}{b_n}\right)\right] \\ 
& \quad  + \dfrac{1}{m}\sum_{s=1}^{m}\sum_{k=-b_n+1}^{b_n-1}  w\left(\dfrac{k}{b_n}\right)\sum_{t \in I_k}  \left[\dfrac{1}{n} \left(Y_{s,t} - \mu_g \right)_i \left(\mu_g - \bar{\bar{Y}} \right)_j + \dfrac{1}{n} \left(\mu_g-\bar{\bar{Y}} \right)_i  \left(Y_{s,t+k}-\mu_g \right)_j \right] \\ 
& =: \tilde{\Sigma}^{ij}_A + M_2 + M_1\,.
\end{align*}

Consequently
\[
\left|\hat{\Sigma}_{G}^{ij} - \tilde{\Sigma}_{A}^{ij}  \right| = |M_1 + M_2| \leq |M_1| + |M_2|\,.
\]
We first present a result which will be useful later. For any Markov chain $s$, 
\begin{align*}
  \|\bar{Y}_s - \mu_g\|_{\infty} & \leq \|\bar{Y}_s - \mu_g\| = \dfrac{1}{n}\left\|\sum_{t=1}^{n}Y_{s,t} - n\mu_g\right\| \\
  & \leq \dfrac{1}{n}\left\|\sum_{t=1}^{n}Y_{s,t} - n \mu_g - L B(n)\right\| + \dfrac{\left\|L B(n)\right\|}{n}\\
  &< \dfrac{D\psi(n)}{n} + \dfrac{\|L B(n)\|}{n}\\
  &< \dfrac{D\psi(n)}{n} + \dfrac{1}{n}\|L\| \left(\sum\limits_{i=1}^{p}|B^{(i)}(n)|^2\right)^{1/2}\\
  & \leq \dfrac{D\psi(n)}{n} + \dfrac{1}{n}\|L\| p^{1/2}(1+\epsilon)\sqrt{2n \log\log n}\,. \numberthis \label{eq:xbars_bound}
\end{align*}
 Similarly,
 \begin{equation}
\label{eq:xbarbar_bound}
   \| \bar{\bar{Y}} - \mu_g\|_{\infty} \leq \dfrac{D\psi(n)}{n} + \dfrac{1}{n}\|L\| p^{1/2}(1+\epsilon)\sqrt{2n \log\log n}\,.
\end{equation}
%
Now consider,
\begin{align*}
& |M_1| \\ 
    & \leq \dfrac{1}{m}\sum_{s=1}^{m}\left\{\sum_{k=-b_n+1}^{b_n-1}\abs{w\left(\dfrac{k}{b_n}\right)}\left[ \dfrac{1}{n}\left|\sum_{t \in I_k}(Y_{s,t}- \mu_g)_i\right|\left|(\mu_g-\bar{\bar{Y}})_j\right|+ \dfrac{1}{n}\left|(\mu_g-\bar{\bar{Y}})_i\right|\left|\sum_{t \in I_k }(Y_{s,t+k}-\mu_g)_j\right|\right]\right\}\\
    & \leq \dfrac{c\|(\bar{\bar{Y}} - \mu_g)\|_{\infty}}{m} \sum_{s=1}^{m}\sum\limits_{k=-b_n+1}^{b_n-1}\left[ \dfrac{1}{n}\left\|\sum_{t \in I_k}(Y_{s,t}-\mu_g)\right\|_{\infty} + \dfrac{1}{n}\left\|\sum_{t \in I_k}(Y_{s,t+k}-\mu_g)\right\|_{\infty} \right]\\
    &\leq \dfrac{c\|(\bar{\bar{Y}} - \mu_g)\|_{\infty}}{m} \\
    & \quad \times \sum_{s=1}^{m}\sum_{k=-b_n+1}^{b_n-1}\left[ \dfrac{1}{n}\left\|\sum_{t \in J_{k1}}(Y_{s,t} - \mu_g) - n(\bar{Y}_s - \mu_g) \right\|_{\infty} + \dfrac{1}{n}\left\|\sum_{t \in J_{k2}}(Y_{s,t} - \mu_g) - n(\bar{Y}_s - \mu_g)\right\|_{\infty} \right]\\
    &\leq \dfrac{c\|(\bar{\bar{Y}} - \mu_g)\|_{\infty}}{m} \sum_{s=1}^{m}\sum\limits_{k=-b_n+1}^{b_n-1}\left[ \dfrac{1}{n}\left\|\sum_{t \in J_{k1}}(Y_{s,t} - \mu_g)\right\|_{\infty} + \dfrac{1}{n}\left\|\sum_{t \in J_{k2}}(Y_{s,t} - \mu_g)\right\|_{\infty} + 2\|\bar{Y}_s - \mu_g\|_{\infty} \right]\\
    & \leq \dfrac{c\|(\bar{\bar{Y}} - \mu_g)\|_{\infty}}{m} \sum\limits_{s=1}^{m}\sum_{k=-b_n + 1}^{b_n-1}   \dfrac{1}{n}\left[\left\|\sum_{t \in J_{k1}}(Y_{s,t} - \mu_g)\right\|_{\infty} + \left\|\sum_{t \in J_{k2}}(Y_{s,t} - \mu_g)\right\|_{\infty} \right]\\
    & \; \;+ 2c(2b_n - 1)\|\bar{\bar{Y}} - \mu_g\|_{\infty}\|\bar{Y}_h - \mu_g\|_{\infty} \textrm{ for some } h \in \{1, \dots, m\} \,.
\end{align*}
Using SIP on summation of $k$ terms, we obtain the following upper bound for $|M_1|$
\begin{align*}
|M_1|
    & < 2c\|(\bar{\bar{Y}} - \mu_g)\|_{\infty} \left[\sum\limits_{k=-b_n + 1}^{b_n-1}\left[ \dfrac{D \psi(k)}{n} + \dfrac{\|L\| p^{1/2}(1+\epsilon)\sqrt{2k \log\log k}}{n}  \right]  + (2b_n - 1) \|\bar{Y}_h - \mu_g\|_{\infty}  \right]\\
    &\leq 2c(2b_n - 1)\|(\bar{\bar{Y}} - \mu_g)\|_{\infty} \left[ \dfrac{D \psi(n)}{n} + \dfrac{\|L\| p^{1/2}(1+\epsilon)\sqrt{n \log\log n}}{n} + \|\bar{Y}_h - \mu_g\|_{\infty}  \right]\\
    & \leq 4c(2b_n - 1)\left[ \dfrac{D \psi(n)}{n} + \dfrac{\|L\| p^{1/2}(1+\epsilon)\sqrt{n \log\log n}}{n}  \right]^2  \quad \text{(by \eqref{eq:xbars_bound} and \eqref{eq:xbarbar_bound})}\,. \numberthis \label{eq:strong_consis_term-1}
\end{align*}
%
 %
%
For $M_2$,
\begin{align*}
   |M_2| & = \left|\dfrac{1}{m}\sum\limits_{s=1}^{m}\left\{ \left(\mu_g - \bar{\bar{Y}} \right)_i  \left(\mu_g - \bar{\bar{Y}} \right)_j\sum_{k=-b_n+1}^{b_n-1}\left(1-\dfrac{\abs{k}}{n}\right)w\left(\dfrac{k}{b_n}\right)\right\}\right|\\
    & \leq\|\bar{\bar{Y}} - \mu_g\|_{\infty}^2\abs{\sum_{k=-b_n+1}^{b_n-1}\left(1-\dfrac{\abs{k}}{n}\right)w\left(\dfrac{k}{b_n}\right)} < \|\bar{\bar{Y}} - \mu_g\|_{\infty}^2\left[\sum_{k=-b_n+1}^{b_n-1}\left|w\left(\dfrac{k}{b_n}\right)\right|\right]\\
    & \leq b_n\|\bar{\bar{Y}} - \mu_g\|_{\infty}^2 \int_{-\infty}^{\infty}|w(x)|dx \\
    & \leq Cb_n\left[ \dfrac{D \psi(n)}{n} + \dfrac{\|L\| p^{1/2}(1+\epsilon)\sqrt{n \log\log n}}{n}  \right]^2 \quad \text{ for some constant $C$ (by \eqref{eq:xbarbar_bound})} \numberthis \label{eq:strong_consis_term-2} \,.
\end{align*}
%
%
%
%
Using \eqref{eq:strong_consis_term-1} and \eqref{eq:strong_consis_term-2},
\begin{align*}
|M_1 + M_2|  & \leq |M_1| + |M_2|
   = D^2g_1(n) + D g_2(n) + g_3(n)\,,
\end{align*}
where
\begin{align*}
    g_1(n) &= (8c + C)\dfrac{b_n \psi^2(n)}{n^2} - 4c\dfrac{\psi^2(n)}{n^2}\\
    g_2(n) &= 2\sqrt{2}\|L\|p^{1/2}(1+\epsilon)\left[(8c + C)\dfrac{b_n\psi(n)\sqrt{n\log \log n}}{n^2} - 4c\dfrac{\psi(n)\sqrt{n\log \log n}}{n^2}\right]\\
    g_3(n) &= \|L\|^2 p (1+\epsilon)^2\left[(8c + C)\dfrac{b_n \log\log n}{n} - 4c \dfrac{\log \log n}{n}\right]\,.
\end{align*}
Under the assumptions of Theorem~\ref{thm:kuelbs}, $\psi(n) = o(n^{1/2})$. Using the law of iterative logarithms (LIL), a tighter bound for $\psi(n)$ is given by \cite{stra:1964} as $o(\sqrt{n\log \log n})$. Since $b_n \log \log n / n \to 0$ as $n \to \infty$, consequently, 
$b_n \psi^2(n)/n^2 \to 0$, $\psi^2(n)/n^2 \to 0$, ${b_n\psi(n)\sqrt{n\log \log n}/n^2} \to 0$, and $\psi(n) \sqrt{n \log \log n}/n^2 \to 0$. Thus, $g_1(n), g_2(n)$ and $g_3(n) \to 0$ as $n \to \infty$.
\end{proof}
%
\begin{proof}[Proof of Theorem \ref{th:consistency}]
We have the following decomposition,
\begin{align*}
\tilde{\Sigma}_{A}^{ij}
    & = \dfrac{1}{m}\sum_{s=1}^{m}\sum_{k=-b_n+1}^{b_n-1}w\left(\dfrac{k}{b_n}\right)\dfrac{1}{n}\sum_{t \in I_k}  \left(Y_{s,t} \pm \bar{Y}_s - \mu_g \right)_i  \left(Y_{s, t + k)} \pm \bar{Y}_s - \mu_g \right)_j\\
    & = \hat{\Sigma}_{A}^{ij} + \dfrac{1}{m}\sum_{s=1}^{m}\sum_{k=-b_n+1}^{b_n-1}w\left(\dfrac{k}{b_n}\right)\dfrac{1}{n}\sum_{t \in I_k}\left[ \left(Y_{s,t} - \bar{Y}_s \right)_i   \left(\bar{Y}_s - \mu_g \right)_j + \left(\bar{Y}_s - \mu_g \right)_i  \left(Y_{s,t+k} - \bar{Y}_s \right)_j\right]\\
    & \quad + \left[\dfrac{1}{m}  \sum_{s=1}^{m}  \left(\bar{Y}_s - \mu_g \right)_i  \left(\bar{Y}_s - \mu_g \right)_j \right]  \left[\sum_{k=-b_n+1}^{b_n-1}w\left(\dfrac{k}{b_n}\right)\left(1 - \dfrac{\abs{k}}{b_n}\right) \right]\\
    & = \hat{\Sigma}_{A}^{ij} + N_1 + N_2 \,,
\end{align*}
where
\begin{align*}
N_1 & = \dfrac{1}{m}\sum_{s=1}^{m}  \sum_{k=-b_n+1}^{b_n-1}  w\left(\dfrac{k}{b_n}\right)\dfrac{1}{n}  \sum_{t \in I_k}  \left[ \left(Y_{s,t} - \bar{Y}_s \right)_i  \left(\bar{Y}_s - \mu_g \right)_j + \left(\bar{Y}_s - \mu_g \right)_i  \left(Y_{s,t+k} - \bar{Y}_s \right)_j\right]\\
N_2 & = \left[\dfrac{1}{m}  \sum_{s=1}^{m}  \left(\bar{Y}_s - \mu_g \right)_i  \left(\bar{Y}_s - \mu_g \right)_j \right]  \left[\sum_{k=-b_n+1}^{b_n-1}w\left(\dfrac{k}{b_n}\right)\left(1 - \dfrac{\abs{k}}{b_n}\right) \right]\;.
\end{align*}
Using the above breakdown and  Lemma~\ref{lemma:G-SVE_breakdown}, 
\begin{align*}
\left|\hat{\Sigma}_{G}^{ij} - \Sigma^{ij} \right| & = \left| \hat{\Sigma}_{A}^{ij} - \Sigma^{ij} + N_1 + N_2 + M_1 + M_2 \right|  \leq \left| \hat{\Sigma}_{A}^{ij} - \Sigma^{ij} \right| +  \left| N_1 \right| +  \left|N_2 \right| + \left| M_1 + M_2 \right| \numberthis \label{eq:G-SVE_full_decomp}
\end{align*}
By the strong consistency of single-chain SV estimator, the first term goes to 0 with probability 1 and by Lemma~\ref{lemma:G-SVE_breakdown}, the third term goes to 0 with probability 1 as $n \to \infty$. It is left to show that $|N_1| \to 0$ and $|N_2| \to 0$ with probability 1. Consider,
\begin{align*}
|N_1| & = \left|\dfrac{1}{m}\sum_{s=1}^{m}  \sum_{k=-b_n+1}^{b_n-1}  w\left(\dfrac{k}{b_n}\right)\dfrac{1}{n}  \sum_{t \in I_k}  \left[ \left(Y_{s,t} - \bar{Y}_s \right)_i  \left(\bar{Y}_s - \mu_g \right)_j + \left(\bar{Y}_s - \mu_g \right)_i  \left(Y_{s,t+k} - \bar{Y}_s \right)_j\right] \right|\\
& \leq \left| \dfrac{1}{m}\sum_{s=1}^{m}\sum_{k=-b_n+1}^{b_n-1}w\left(\dfrac{k}{b_n}\right)\dfrac{1}{n}\sum_{t \in I_k}\left[ \left(Y_{s,t} - \bar{Y}_s \right)_i  \left(\bar{Y}_s - \mu_g \right)_j \right] \right| \\ 
& \quad + \left| \dfrac{1}{m}\sum_{s=1}^{m}\sum_{k=-b_n+1}^{b_n-1}w\left(\dfrac{k}{b_n}\right)\dfrac{1}{n}\sum_{t \in I_k}\left[ \left(\bar{Y}_s - \mu_g \right)_i  \left(Y_{s,t+k} - \bar{Y}_s \right)_j\right] \right|\,.
\end{align*}
We will show that the first term goes to 0 and the proof for the second term is similar. Consider
\begin{align*}
    & \left| \dfrac{1}{m}\sum_{s=1}^{m}\sum_{k=-b_n+1}^{b_n-1}w\left(\dfrac{k}{b_n}\right)\dfrac{1}{n}\sum_{t \in I_k}\left[ \left(Y_{s,t} - \bar{Y}_s \right)_i  \left(\bar{Y}_s - \mu_g \right)_j \right] \right| \\
    & \leq \dfrac{1}{m}\sum_{s=1}^{m}\sum_{k=-b_n+1}^{b_n-1} \left| w\left(\dfrac{k}{b_n}\right) \right| \dfrac{ \left| \left(\bar{Y}_s - \mu_g \right)_j \right| }{n} \left| \sum_{t \in J_{k1}} \left(\mu_g - Y_{s,t} \right)_i  + \abs{k} \left(\bar{Y}_s - \mu_g \right)_i \right|\\
    & \leq \dfrac{1}{m}\sum_{s=1}^{m}\sum_{k=-b_n+1}^{b_n-1}  \left|w\left(\dfrac{k}{b_n}\right) \right|\dfrac{\|\bar{Y}_s - \mu_g\|_{\infty}}{n} \left\|\sum_{t \in J_{k1}} \left(\mu_g - Y_{s,t} \right) + \abs{k} \left(\bar{Y}_s - \mu_g \right)\right\|_{\infty}\\
    & \leq \dfrac{1}{m}\sum_{s=1}^{m}\sum_{k=-b_n+1}^{b_n-1} \left|w\left(\dfrac{k}{b_n}\right) \right|\dfrac{\|\bar{Y}_s - \mu_g\|_{\infty}}{n} \left(\left\|\sum_{t \in J_{k1}}  \left(Y_{s,t} - \mu_g \right)\right\|_{\infty} + \abs{k}\|\bar{Y}_s - \mu_g\|_{\infty} \right)\\
    & \leq \dfrac{c}{m}\sum_{s=1}^{m}\sum_{k=-b_n+1}^{b_n-1}\dfrac{\|\bar{Y}_s - \mu_g\|_{\infty}}{n} \left\|\sum_{t \in J_{k1}} \left(Y_{s,t} - \mu_g\right) \right\| + \dfrac{c}{m}\sum\limits_{s=1}^{m} \dfrac{b_n(b_n-1)}{n} \left \|\bar{Y}_s - \mu_g \right \|_{\infty}^2\,.
\end{align*}

Using SIP on the summation of $k$ terms,
\begin{align*}
    & \left|\dfrac{1}{m}\sum_{s=1}^{m}\sum_{k=-b_n+1}^{b_n-1}w\left(\dfrac{k}{b_n}\right)\dfrac{1}{n}\sum_{t=1}^{n-k}\left[ \left(Y_{s,t} - \bar{Y}_s \right)_i  \left(\bar{Y}_s - \mu_g \right)_j \right] \right|\\
   &  < \dfrac{c}{m}\sum\limits_{s=1}^{m}\|\bar{Y}_s - \mu_g\|_{\infty}\sum\limits_{k=-b_n + 1}^{b_n-1}\left[ \dfrac{D \psi(k)}{n} + \dfrac{\|L\| p^{1/2}(1+\epsilon)\sqrt{2k \log\log k}}{n}  \right] + \dfrac{c}{m}\sum\limits_{s=1}^{m} \dfrac{b_n(b_n-1)}{n}\|\bar{Y}_s - \mu_g\|_{\infty}^2\\
   &  < \dfrac{c(2b_n-1)}{m}\sum\limits_{s=1}^{m}\|\bar{Y}_s - \mu_g\|_{\infty}\left[ \dfrac{D \psi(n)}{n} + \dfrac{\|L\| p^{1/2}(1+\epsilon)\sqrt{2n \log\log n}}{n}  \right] + \dfrac{c}{m}\sum\limits_{s=1}^{m} \dfrac{b_n(b_n-1)}{n}\|\bar{Y}_s - \mu_g\|_{\infty}^2 \\
   &  \leq   c\left(2b_n - 1 + \dfrac{b_n^2}{n} - \dfrac{b_n}{n}\right)\left[ \dfrac{D \psi(n)}{n} + \dfrac{\|L\| p^{1/2}(1+\epsilon)\sqrt{2n \log\log n}}{n}  \right]^2  \to 0\,. \quad  \text{(by \eqref{eq:xbars_bound})}
\end{align*}
%
%
%
%
Similarly, the second part of $N_1 \to 0$ with probability 1. 
%
Following the steps in \eqref{eq:strong_consis_term-2}, 
\begin{align*}
    |N_2| & \leq Cb_n\left[ \dfrac{D \psi(n)}{n} + \dfrac{\|L\| p^{1/2}(1+\epsilon)\sqrt{2n \log\log n}}{n}  \right]^2  \to 0\,.
\end{align*}
Thus, in \eqref{eq:G-SVE_full_decomp}, every term goes to 0 and $\hat{\Sigma}_{G}^{ij} \to \Sigma^{ij}$ with probability 1 as $n \to \infty$. 
\end{proof}

\subsection{Proof of Theorem \ref{th:G-SVE_bias}}
\label{appendix:gsv_bias}

By Equation~\ref{eq:G-ACF_expec_breakdown},
%
%
\[
\mathbb{E} \left[\hat{\Upsilon}_{G}(k) \right] = \left(1 - \dfrac{\abs{k}}{n}\right)\Upsilon(k) + O_1 + O_2    \,.
\]
where both $O_1 \textrm{ and } O_2$ are the small order terms where $O_1 = n^{-1}\abs{k} \mathcal{O}(n^{-1}) \text{ and } O_2 = \mathcal{O}(n^{-1})$. By our assumptions, $\sum_{k=-\infty}^{\infty}\Upsilon(k) < \infty$.  
For a truncation point $b_n$, $w(k/b_n)=0$ for all $\abs{k} > b_n$. Therefore, SV estimator can be written as a weighted sum of estimated autocovariances from lag $-n+1$ to $n-1$. Consider the G-SV estimator,
\begin{align*}
 \mathbb{E} \left[\hat{\Sigma}_{G} - \Sigma \right] & = \sum_{k=-n+1}^{n-1} w\left(\dfrac{k}{b_n}\right)\mathbb{E} \left[\hat{\Upsilon}_{G}(k) \right] - \sum_{k=-\infty}^{\infty}\Upsilon(k)\\
    &= \sum_{k=-n+1}^{n-1}  w\left(\dfrac{k}{b_n}\right)\left[\left(1-\dfrac{k}{n}\right)\Upsilon(k) + O_1 + O_2 \right]  - \sum_{k=-\infty}^{\infty}\Upsilon(k)\\
    &= \sum_{k=-n+1}^{n-1} \left[ w\left(\dfrac{k}{b_n}\right)\left(1-\dfrac{\abs{k}}{n}\right)\Upsilon(k)\right]  - \sum_{k=-\infty}^{\infty}\Upsilon(k) + \sum_{k=-n+1}^{n-1}\left[  w\left(\dfrac{k}{b_n}\right)  (O_1 + O_2) \right] \\ 
    & = P_1 + P_2\,, \numberthis \label{eq:bias_gsve}
\end{align*}
where 
\begin{align*}
    P_1 &= \sum\limits_{k=-n+1}^{n-1}\left[w\left(\dfrac{k}{b_n}\right)\left(1-\dfrac{\abs{k}}{n}\right)\Upsilon(k) \right] - \sum\limits_{k=-\infty}^{\infty}\Upsilon(k) \textrm{ and }\\   
    P_2 &= \sum_{k=-n+1}^{n-1}\left[  w\left(\dfrac{k }{b_n}\right)\left(O_1 + O_2\right)\right]\,.
\end{align*}
Similar to \cite{hannan:1970}, we break $P_1$ into three parts. Note that notation $A = o(z)$ for matrix $A$  implies $A^{ij} = o(z)$ for every $(i,j)$th element of the matrix $A$. Consider,
\begin{equation}
\label{eq:P1_decomp}
P_1 = -\sum_{\abs{k}\geq n}\Upsilon(k)  -  \sum_{k = -n+1}^{n-1}w\left(\dfrac{k}{n}\right)\dfrac{\abs{k}}{n}\Upsilon(k)- \sum_{k = -n+1}^{n-1}\left(1-w\left(\dfrac{k}{n}\right)\right)\Upsilon(k)\,.  
\end{equation}
We deal with the three subterms of term $P_1$ individually. First,
\begin{align*}
 {-\sum_{\abs{k}\geq n}\Upsilon(k)} &\leq  \sum_{\abs{k}\geq n} \abs{\dfrac{k}{n}}^q   \Upsilon(k) = \dfrac{1}{b_n^q}\abs{\dfrac{b_n}{n}}^q \sum_{k\geq n}\abs{k}^q  \Upsilon(k) = o\left(\dfrac{1}{b_n^q}\right)\,, \numberthis \label{eq:P1_first}
\end{align*}
since $\sum_{\abs{k}\geq n}\abs{k}^q \Upsilon(k)  < \infty$. Next,
\begin{align*}
 \sum_{k = -n+1}^{n-1}w\left(\dfrac{k}{n}\right)\dfrac{\abs{k}}{n}\Upsilon(k)   &\leq \dfrac{c}{n}\sum_{k = -n+1}^{n-1}\abs{k} \Upsilon(k)  \,.
\end{align*}
For $q\geq 1$,
\begin{align*}
\dfrac{c}{n}\sum_{k = -n+1}^{n-1}\abs{k}  \Upsilon(k) &\leq \dfrac{c}{n}\sum_{k = -n+1}^{n-1}\abs{k}^q  \Upsilon(k) = \dfrac{1}{b_n^q}\dfrac{b_n^q}{n} c\sum_{k = -n+1}^{n-1}\abs{k}^q  \Upsilon(k)  = o\left(\dfrac{1}{b_n^q}\right)\,.
\end{align*}
          
For $q <1$,
\begin{align*}
   \dfrac{c}{n}\sum_{k = -n+1}^{n-1}\abs{k} \Upsilon(k)  &\leq c\sum_{k =-n+1}^{n-1}\abs{\dfrac{k}{n}}^q  \Upsilon(k)  = \dfrac{1}{b_n^q}\dfrac{b_n^q}{n^q} c \sum_{k =-n+1}^{n-1}\abs{k}^q \Upsilon(k)  = o\left(\dfrac{1}{b_n^q}\right) \, .
\end{align*}
So,
\begin{equation}
\label{eq:P1_second}
 \sum_{k = -n+1}^{n-1}w\left(\dfrac{k}{n}\right)\dfrac{|k|}{n}\Upsilon(k) = o \left(\dfrac{1}{b_n^q} \right)
\end{equation}
Lastly, by our assumptions, for $x \to 0$
\[
\dfrac{1 - w(x)}{\abs{x}^q} = k_q + o(1)\,.
\]
For $x = k/b_n$, $\abs{k/b_n}^{-q}(1-w(k/b_n))$ converges boundedly to $k_q$ for each $k$.
So,
\begin{align*}
     \sum_{k = -n+1}^{n-1}\left(1-w\left(\dfrac{k}{b_n}\right)\right)\Upsilon(k) &= -\dfrac{1}{b_n^q}\sum_{k = -n+1}^{n-1}  \left(\dfrac{|k|}{b_n}\right)^{-q} {\left(1-w\left(\dfrac{k}{b_n}\right)\right)}|k|^q \Upsilon(k) \\
     & = -\dfrac{1}{b_n^q}\sum_{k = -n+1}^{n-1}   \left[k_q + o(1) \right]|k|^q \Gamma(k) \\
     & = -\dfrac{k_q \Phi^{(q)} }{b_n^q} + o\left( \dfrac{1}{b_n^q} \right) \numberthis \label{eq:P1_third}\,.
\end{align*}

Finally, we will solve for $P_2$. Note that $O_1 = (\abs{k}/n)\mathcal{O}(1/n)$ and $O_2 = \mathcal{O}(1/n)$. So,
  \begin{align*}
    P_2  & \leq \sum_{k=-n+1}^{n-1}\abs{w\left(\dfrac{k}{b_n}\right)O_1} + \sum_{k=-n+1}^{n-1}\abs{w\left(\dfrac{k}{b_n}\right)O_2} \\
      &= \mathcal{O}\left(\dfrac{1}{n}\right)\sum_{k=-n+1}^{n-1}\dfrac{\abs{k}}{n}\abs{w\left(\dfrac{k}{b_n}\right)} + \mathcal{O}\left(\dfrac{1}{n}\right)\sum_{k=-b_n+1}^{b_n-1}\abs{w\left(\dfrac{k}{b_n}\right)}\\
      &= \mathcal{O}\left(\dfrac{b_n}{n}\right)\dfrac{1}{b_n} \sum_{k=-b_n+1}^{b_n-1}\abs{w\left(\dfrac{k}{b_n}\right)}\\
      &= \mathcal{O}\left(\dfrac{b_n}{n}\right) = o\left(\dfrac{b_n}{b_n^{q+1}}\right) = o\left(\dfrac{1}{b_n^{q}}\right) \,. \numberthis \label{eq:P_2}
 \end{align*}

Using \eqref{eq:P1_decomp},\eqref{eq:P1_first}, \eqref{eq:P1_second}, \eqref{eq:P1_third}, and \eqref{eq:P_2} in \eqref{eq:bias_gsve},
\[
\mathbb{E} \left[\hat{\Sigma}_{G} - \Sigma \right] = -\dfrac{k_q \Phi^{(q)} }{b_n^q} + o\left( \dfrac{1}{b_n^q} \right)\,,
\]
which completes the result.
%

\subsection{Proof of Theorem~\ref{th:G-SVE_variance}}
\label{appendix:variance}
 
By Lemma~\ref{lemma:G-SVE_breakdown},
\begin{align*}
 &\left| \hat{\Sigma}_{G}^{ij} - \tilde{\Sigma}^{ij} \right|\\
 & \leq \dfrac{1}{m} \sum_{s=1}^m \left| \sum_{k=-b_n+1}^{b_n-1} w \left(\dfrac{k}{b_n} \right) \sum_{t \in I_k}   \left[ \left( \dfrac{(Y_{s,t} - \mu_g)_i(\mu_g-\bar{\bar{Y}})_j}{n}\right)+ \left(\dfrac{(\mu_g-\bar{\bar{Y}})_i(Y_{s,t+k}-\mu_g)_j}{n}\right) \right] \right.\\
& \quad \quad  \left. + (\mu_g-\bar{\bar{Y}})(\mu_g-\bar{\bar{Y}})^T\sum_{k=-b_n+1}^{b_n-1}\left(\dfrac{n-|k|}{n}\right)w\left(\dfrac{k}{n}\right) \right|  < D^2g_1(n) + Dg_2(n) + g_3(n)\,,
\end{align*}
where $g_1(n), g_2(n), g_3(n) \to 0$ as $n \to \infty$ as defined in Lemma~\ref{lemma:G-SVE_breakdown}. Then there exists an $N_0$ such that
\begin{align*}
\left(\hat{\Sigma}_{G}^{ij} - \tilde{\Sigma}^{ij} \right)^2 &= \left(\hat{\Sigma}_{G}^{ij} - \tilde{\Sigma}^{ij} \right)^2 \, I(0 \leq n \leq N_0) + \left(\hat{\Sigma}_{G}^{ij} - \tilde{\Sigma}^{ij} \right)^2 \, I(n > N_0)\\
& \leq \left(\hat{\Sigma}_{G}^{ij} - \tilde{\Sigma}^{ij} \right)^2 \, I(0 \leq n \leq N_0) +  \left(D^2g_1(n) + Dg_2(n) + g_3(n) \right)^2 I(n > N_0)\\
& := g_n^*(Y_{1,1}, \dots, Y_{1n}, \dots, Y_{m1}, \dots, Y_{mn})\,.
\end{align*}
But since by assumption $\E D^4 <\infty$ and the fourth moment is finite,
\[
\E \left| g_n^* \right| \leq  \E \left[\left(\hat{\Sigma}_{G}^{ij} - \tilde{\Sigma}_{A}^{ij} \right)^2 \right] + \E \left[\left(D^2g_1(n) + Dg_2(n) + g_3(n) \right)^2 \right] < \infty\,.
\]
Thus, $\E \left| g_n^* \right| < \infty$ and further as $n \to \infty$, $g_n \to 0$ under the assumptions. Since $g_1, g_2, g_3 \to 0$, $\E g_n^* \to 0$. By the majorized convergence theorem \citep{zeid:2013}, as $n \to \infty$,
\begin{equation}
\label{eq:squared_mean_diff}
  \E \left[\left(\hat{\Sigma}_{G}^{ij} - \tilde{\Sigma}^{ij} \right)^2 \right] \to 0\,.
\end{equation}
We will use \eqref{eq:squared_mean_diff} to show that the variances are equivalent. Define,
\[
\xi\left(\hat{\Sigma}_{G}^{ij}, \tilde{\Sigma}^{ij} \right) = \Var\left(\hat{\Sigma}_{G}^{ij} - \tilde{\Sigma}^{ij} \right) + 2 \E\left[ \left(\hat{\Sigma}_{G}^{ij} -  \tilde{\Sigma}^{ij} \right) \left(\tilde{\Sigma}^{ij}  - \E \left( \tilde{\Sigma}^{ij} \right) \right) \right]\,.
\]
We will show that the above is $o(1)$. Using Cauchy-Schwarz inequality followed by \eqref{eq:squared_mean_diff},
\begin{align*}
\left|  \xi\left(\hat{\Sigma}_{G}^{ij}, \tilde{\Sigma}^{ij} \right) \right| & \leq \left| \Var\left(\hat{\Sigma}_{G}^{ij} -  \tilde{\Sigma}^{ij} \right) \right| + \left| 2 \E\left[ \left(\hat{\Sigma}_{G}^{ij} - \tilde{\Sigma}^{ij} \right) \left(\tilde{\Sigma}^{ij}  - \E \left( \tilde{\Sigma}^{ij} \right) \right) \right]\right| \\ 
& \leq \E\left[\left(\hat{\Sigma}_{G}^{ij} -  \tilde{\Sigma}^{ij} \right)^2 \right] + 2 \left| \left(\E\left[ \left(\hat{\Sigma}_{G}^{ij} - \tilde{\Sigma}^{ij} \right)^2 \right]  \Var\left(\tilde{\Sigma}^{ij}  \right)   \right)^{1/2}\right| \\ 
& = o(1) + 2\left(o(1) \left(O\left( \dfrac{b_n}{n}\right)  + o\left( \dfrac{b_n}{n}\right) \right)  \right) = o(1)\,.
\end{align*}
Finally,
\begin{align*}
 \Var\left(\hat{\Sigma}_{G}^{ij} \right)  & = \E \left[ \left(\hat{\Sigma}_{G}^{ij}  - \E \left[\hat{\Sigma}_{R}^{ij}  \right] \right)^2 \right]\\
& = \E \left[ \left(\hat{\Sigma}_{G}^{ij} \pm \tilde{\Sigma}^{ij} \pm \E \left[ \tilde{\Sigma}^{ij}\right] - \E \left[\hat{\Sigma}_{G}^{ij}  \right] \right)^2 \right]\\
& = \E\left[ \left( \left(\hat{\Sigma}_{G}^{ij} - \tilde{\Sigma}^{ij} \right) + \left(\tilde{\Sigma}^{ij}  - \E\left[\tilde{\Sigma}^{ij}\right]\right) + \left(\E\left[\tilde{\Sigma}^{ij}\right] - \E \left[\hat{\Sigma}_{G}^{ij}  \right] \right)  \right)^2 \right] \\ 
& =  \E\left[ \left(\tilde{\Sigma}^{ij}  - \E\left[\tilde{\Sigma}^{ij}\right]\right)^2 \right] + \E \left[ \left(\left(\hat{\Sigma}_{G}^{ij} - \tilde{\Sigma}^{ij} \right) + \left(\E\left[\tilde{\Sigma}^{ij}\right] - \E \left[\hat{\Sigma}_{G}^{ij}  \right] \right) \right)^2 \right] \\
& \quad \quad + 2\E\left[\left(\tilde{\Sigma}^{ij}  - \E\left[\tilde{\Sigma}^{ij}\right]\right) \left(\hat{\Sigma}_{G}^{ij} - \tilde{\Sigma}^{ij} \right) + 2 \left(\tilde{\Sigma}^{ij}  - \E\left[\tilde{\Sigma}^{ij}\right]\right) \left(\E \left[\tilde{\Sigma}^{ij}\right] - \E \left[\hat{\Sigma}_{G}^{ij}  \right] \right)\right]\\
& = \Var\left( \tilde{\Sigma}^{ij}\right) + \Var\left(\hat{\Sigma}_{G}^{ij} - \tilde{\Sigma}^{ij} \right) + 2 \E\left[ \left(\hat{\Sigma}_{G}^{ij} -  \tilde{\Sigma}^{ij} \right) \left(\tilde{\Sigma}^{ij}  - \E \left( \tilde{\Sigma}^{ij} \right) \right) \right] + o(1)\\
& = \Var\left( \tilde{\Sigma}^{ij}\right) + o(1)\,.
\end{align*}

\cite{hannan:1970} has given the calculations for variance of $\tilde{\Sigma}$ as 
\begin{equation} \label{eq:hannan_var}
\dfrac{n}{b_n}\Var(\tilde{\Sigma}^{ij}) = \left[\Sigma^{ii}\Sigma^{jj} + \left(\Sigma^{ij} \right) \right]\int_{-\infty}^{\infty}w^2(x)dx + o(1)\,.
\end{equation}
Plugging (\ref{eq:hannan_var}) into variance of $\hat{\Sigma}_{G}$ gives the result of the theorem.
 

\section{Additional Examples}
We present five additional experimental studies to illustrate the advantage of the G-ACF estimator. 
\subsection{Bayesian Poisson Change Point Model}


Consider the militarized interstate dispute (MID) data of \cite{martin2011mcmcpack} which describes the annual number of military conflicts in the United States. In order to detect the number and timings of the cyclic phases in international conflicts, we fit a Bayesian Poisson change-point model. Following \cite{martin2011mcmcpack}, we will use \texttt{MCMCpoissonChange} from \texttt{MCMCpack} to fit the model with six change-points which samples the latent states based on the algorithm in \cite{chib1998estimation}. The Poisson change-point model in \texttt{MCMCpoissonChange} is:

%
\begin{align*}
    y_t &\sim \text{Poisson}(\lambda_i), \qquad i = 1, ..., k\\
    \lambda_i &\sim \text{Gamma}(c_o, d_o), \qquad i = 1,..., k\\
    p_{ii} &\sim \text{Beta}(\alpha, \beta), \qquad i = 1,..., k\,.
\end{align*}

This yields a 7-dimensional posterior distribution in $\lambda = (\lambda_1, \dots, \lambda_7)^T$. Figure~\ref{fig:poisson-trace} shows the evolution of two chains started from random points for the second and third component. We report the ACF plots for the second component component only, however, similar behavior is observed in ACF plots of other components as well.

\begin{figure}[h]
    \centering
    \includegraphics[width = 0.7\textwidth]{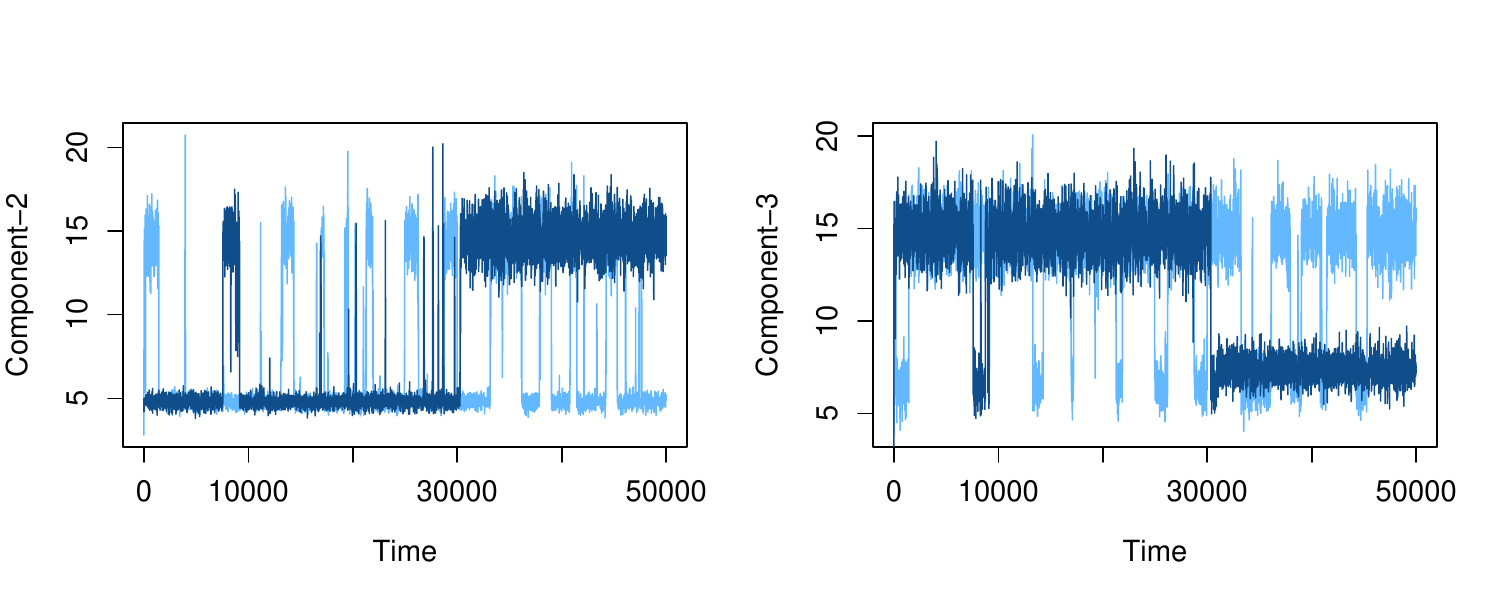}
    \caption{Change point: Trace plots for second (left) and third (right) component.}
    \label{fig:poisson-trace}
\end{figure}
\begin{figure}[h]
    \centering
      \includegraphics[width = .6\textwidth]{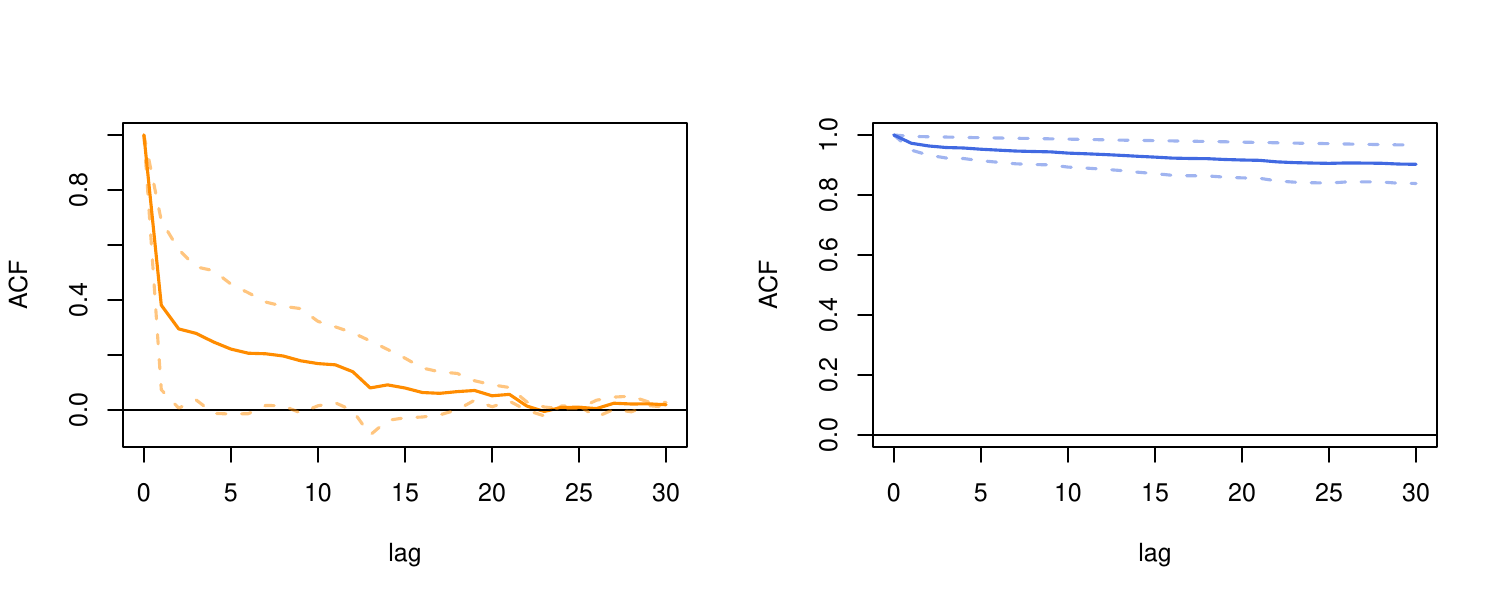}  \\ \vspace{-.5cm}
      \includegraphics[width = .6\textwidth]{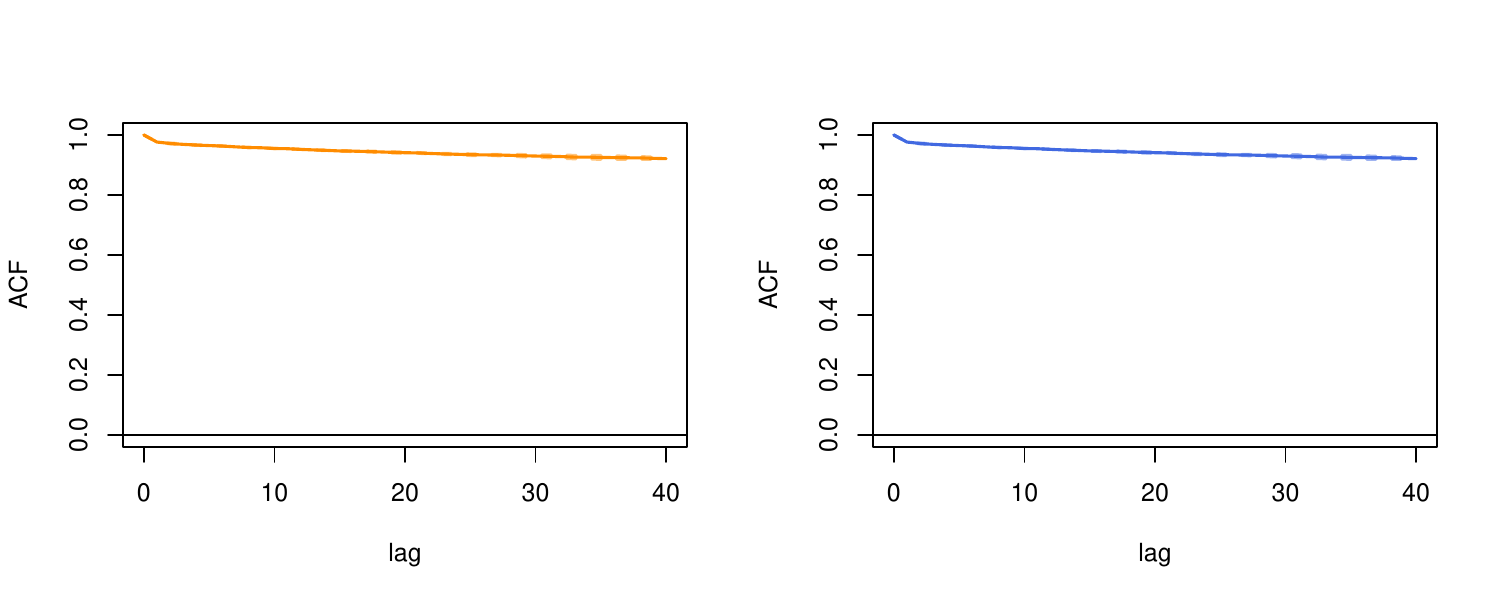}      
    \caption{Change point: ACF plots using local (left) and global (right) centering for $m=2$ parallel Markov chains. The individual chains are shown through dashed lines and average over $m$ chains is the solid line. The chain lengths are $n = 1000$ (top row) and $n = 10000$ (bottom row).}
    \label{fig:poisson-acf}
\end{figure}

Figure~\ref{fig:poisson-acf} demonstrates a striking advantage of G-ACF against locally-centered ACF in estimating the autocorrelations. For a chain length of $n=1000$, locally-centered ACF gives a false sense of security about the Markov chains whereas in reality, the process is highly correlated. 


\subsection{Network crawling}


The \texttt{faux.magnolia.high} dataset available in the \texttt{ergm} \texttt{R} package represents a simulated friendship network based on Ad-Health data (\cite{resnick1997protecting}). The school communities represented by the network data are located in the southern United States. Each node represents a student and each edge represents a friendship between the nodes it connects.

The goal is to draw nodes uniformly from the network by using a network crawler. \cite{nilakanta2019ensuring} modified the data by removing 1,022 out of 1,461 nodes to obtain a well-connected graph. This resulting social network has 439 nodes and 573 edges. We use a Metropolis-Hastings algorithm with a simple random-walk  proposal suggested by \cite{gjoka2011practical}. Each node is associated with five features namely - degree of connection, cluster coefficient, grade, binary sex indicator (1 for female, 0 for male), and binary race indicator (1 for white, 0 for others). We sample two parallel Markov chains starting from two students belonging to different races.

\begin{figure}[h]
    \centering
      \includegraphics[width = .8\textwidth]{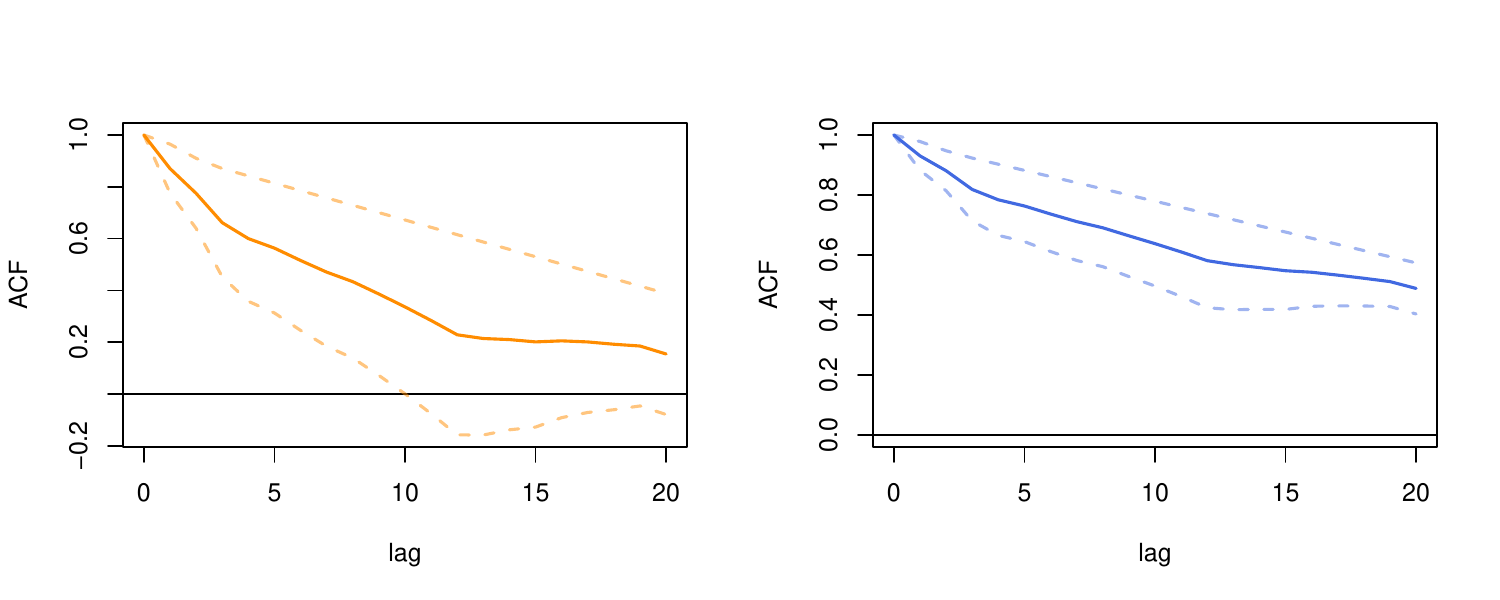}\\ \vspace{-.5cm}
      \includegraphics[width = .8\textwidth]{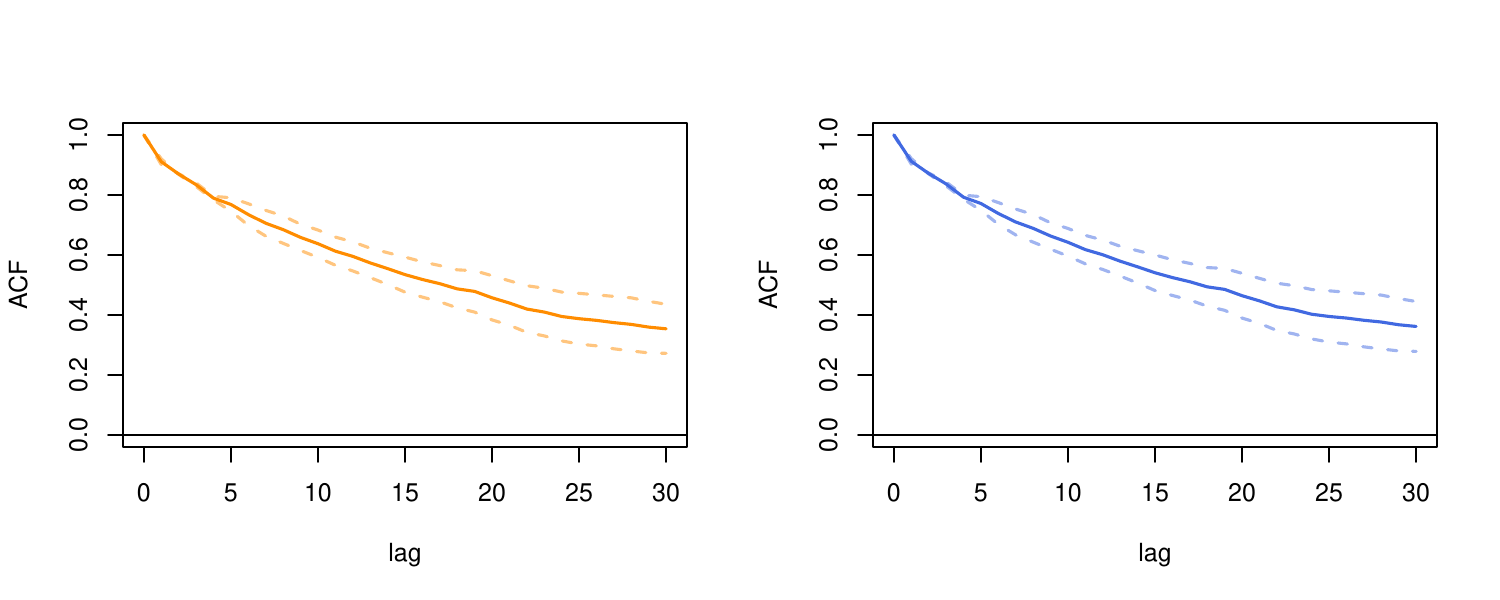}

    \caption{Crawling:  ACF plots using local-centering (left) and global-centering (right) for $m=2$ parallel Markov chains. The individual chains are shown through dashed lines and average over $m$ chains is the solid line. The chain lengths are $n = 100$ (top row) and $n = 1000$ (bottom row).}
    \label{fig:magnolia-acf}
\end{figure}
Figure~\ref{fig:magnolia-acf} shows the ACF plots for the \textit{grade} feature at two different simulation sizes for $m=2$ parallel Markov chains. For a shorter chain length, both the chains have explored different clusters and as a consequence, the local and global mean do not agree. Regardless, G-ACF displays a clear advantage over locally-centered ACF for a chain length of $n=100$ samples whereas the latter takes $n=1000$ samples to reach the truth. 

\subsection{VAR with negative autocorrelations}

In this example, we assess the performance of locally and globally-centered ACF estimators in the presence of negative autocorrelations. Consider the VAR(1) process from Example~\ref{ex:var} with $\Omega$ denoting the AR correlation matrix, $\Phi$ denoting the coefficient matrix, and the  $N(0, \Psi)$ being invariant distribution of Markov chain. We fix $\Omega$ with parameter $-0.5$. $\Xi$ has eigenvalues $-0.9$ and $-0.1$ and its diagonal entries are negative. Since $\Upsilon(k) = \Gamma(k) = \Xi^k\Psi$, autocoaviances for each component are positive when $k$ is even and negative when $k$ is odd. 

Figure~\ref{fig:neg_var-acf} shows the estimated ACF plots for the first component of the first chain against the truth in red dots. The top row shows results for $n=1000$ and bottom for $n=10000$. A negative lag-$k$ correlation emerges out of a systematic jumping of Markov chains away from its. This behavior itself ensures that the Markov chains mix well. This is in direct contrast to the positive autocorrelation case where the chains cannot explore the state space well. As a consequence, the locally-centered ACF estimator does not severely underestimate the truth here, rendering a very similar estimation quality to G-ACF, even for small $n$ case.

\begin{figure}[h]
    \centering
      \includegraphics[width = .8\textwidth]{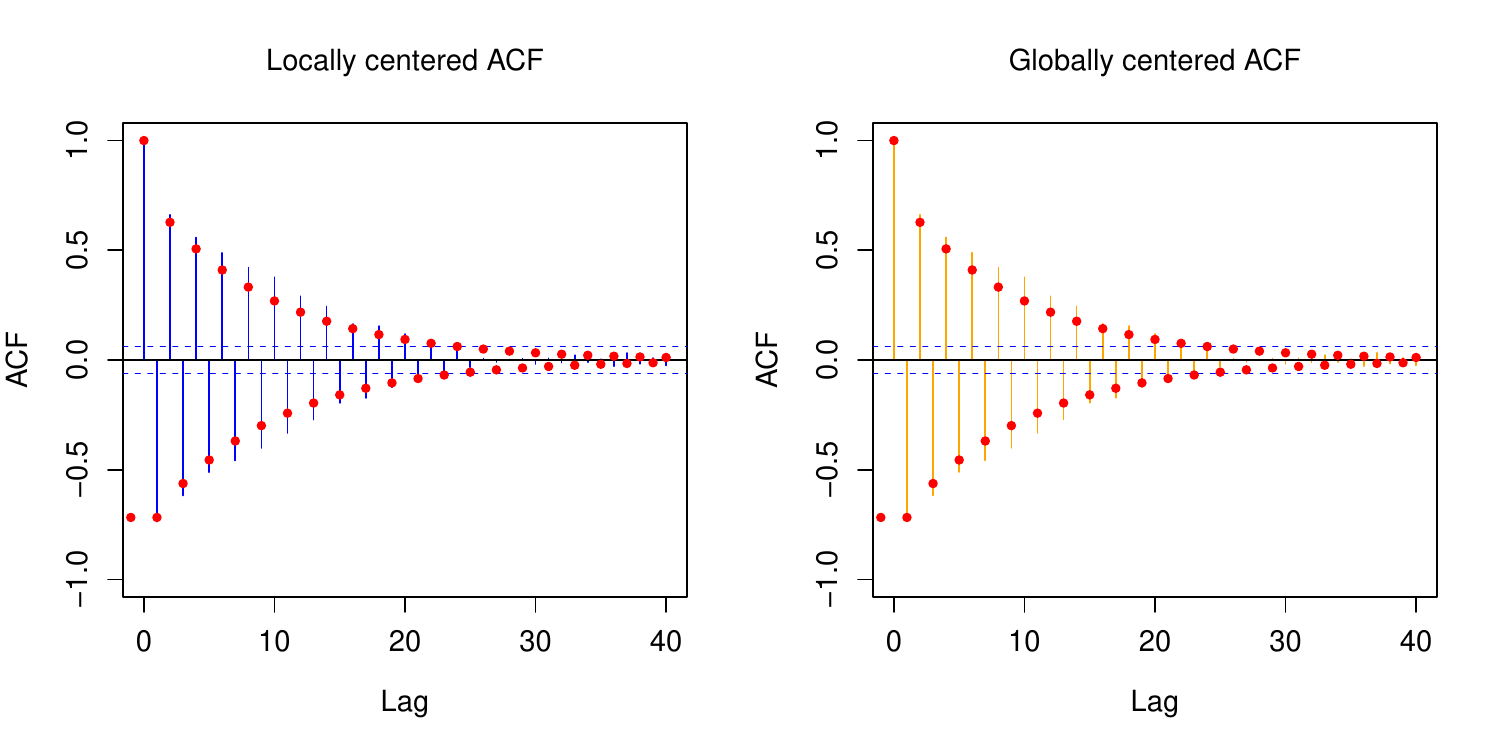}\\ \vspace{-.5cm}
      \includegraphics[width = .8\textwidth]{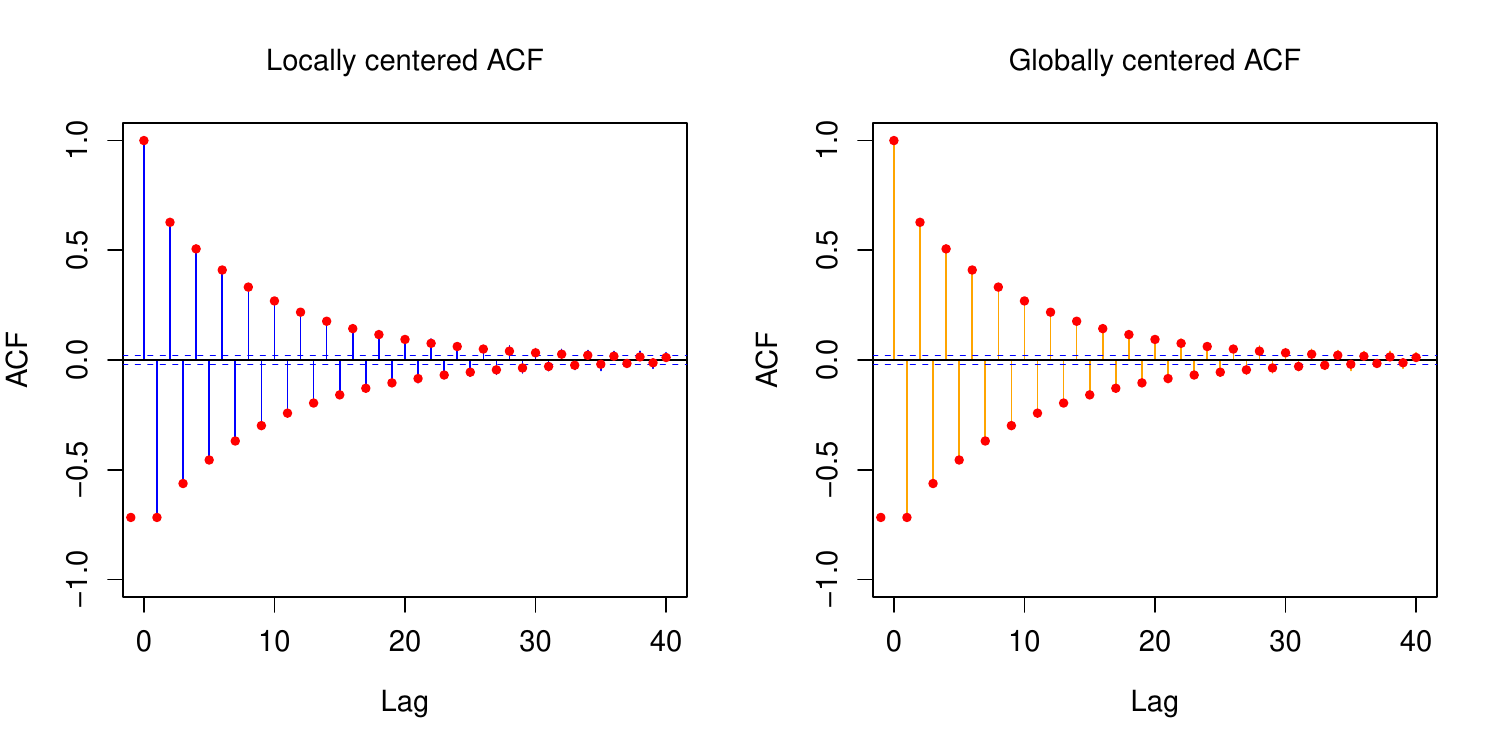}

    \caption{Negative ACF VAR:  ACF plots using local-centering (left) and global-centering (right) for $m=5$ parallel Markov chains. (Top) $n = 10^3$ and (bottom) $n = 10^4$. The red dots are the true ACF.}
    \label{fig:neg_var-acf}
\end{figure}

\subsection{High dimensional VAR}

Estimating the limiting covariance matrix in an MCMC algorithm is generally a challenging task and an area of active research. Although significant contributions have been made to this cause, the performance of such estimators in high-dimensions is not yet satisfactory, particularly for slow-mixing Markov chains.
%
  In general, the G-SV estimator will be better than the A-SV estimator, but for slow-mixing Markov chains, both estimators do not perform well at reasonable sample sizes.

  
  To investigate the relative performance of A-SV and G-SV estimator under different mixing characteristics, we implement a 100 dimensional VAR example with $m = 5$. For $p = 100$, we consider two settings: low and high target correlations that furnishes fast and slow-mixing MCMC chains, respectively. In Figure~\ref{fig:highdimvar-frob}, we plot the relative Frobenius norm of A-SV $(\|\hat{\Sigma}_A\|_F/\|\Sigma\|_F)$ and G-SV $(\|\hat{\Sigma}_G\|_F/\|\Sigma\|_F)$ estimators as a function of sample size $n$. The left plot corresponds to low target correlation that guarantees that all five chains mix well. As a consequence, both A-SV and G-SV provide similarly good estimation quality. However, for slowly-mixing chains on the right, both the A-SV and G-SV estimators are unable to estimate the truth well even for $n = 5\times10^4$. Note that the G-SV estimator provides better estimation than A-SV estimator for smaller $n$.
  
  
  \begin{figure}[h]
    \centering
    \includegraphics[width = .45\textwidth]{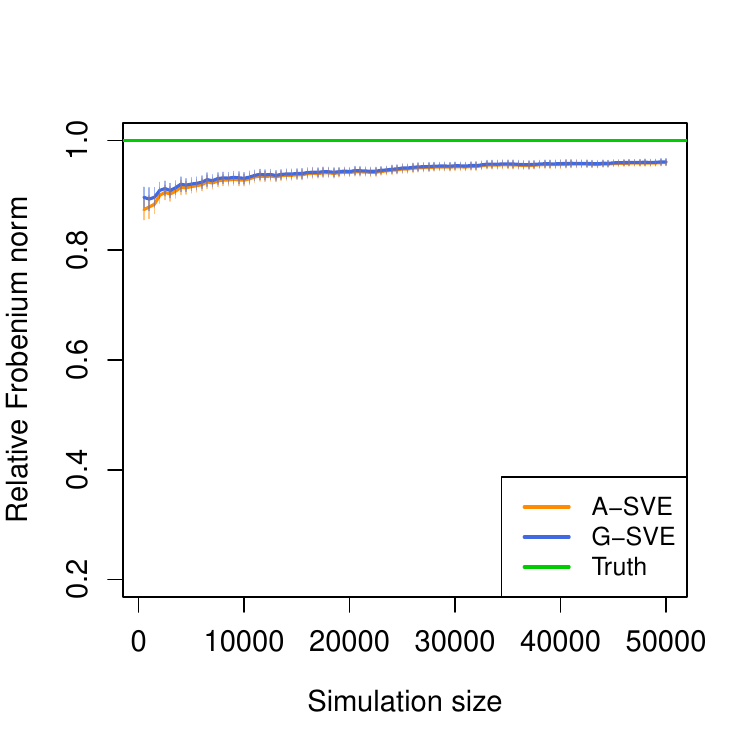}
    \includegraphics[width = .45\textwidth]{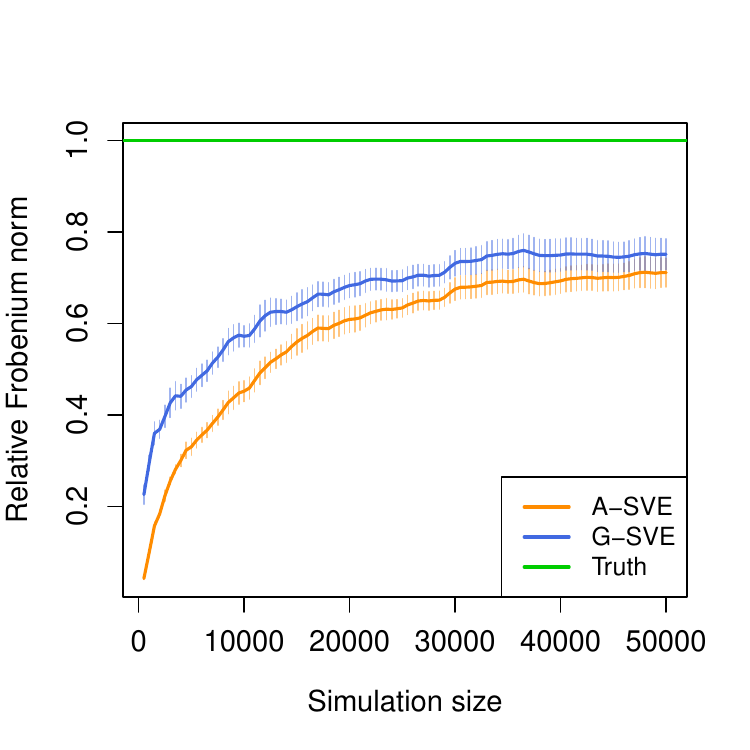}
    
    \caption{High dimensional VAR:  (Left) Running plot for relative Frobenius norm of A-SV and G-SV estimator for low target correlation. (Right) Running plot for relative Frobenius norm of A-SV and G-SV estimator for high target correlation.}
    \label{fig:highdimvar-frob}
  \end{figure}

\subsection{VAR with large $m$}

We present the VAR example with the autocorrelations estimated as a function of $m$. For the VAR model in Section~\ref{ex:var}, we study the estimated ACF as a function of $m$. We set $m = \{2,3,4, \dots, 100\}$ and compare the  autocorrelation at lag 1 and lag 40 for the first component, using both globally-centered and locally-centered ACFs. 

\begin{figure}[h]
\centering
   \includegraphics[width=.40\linewidth]{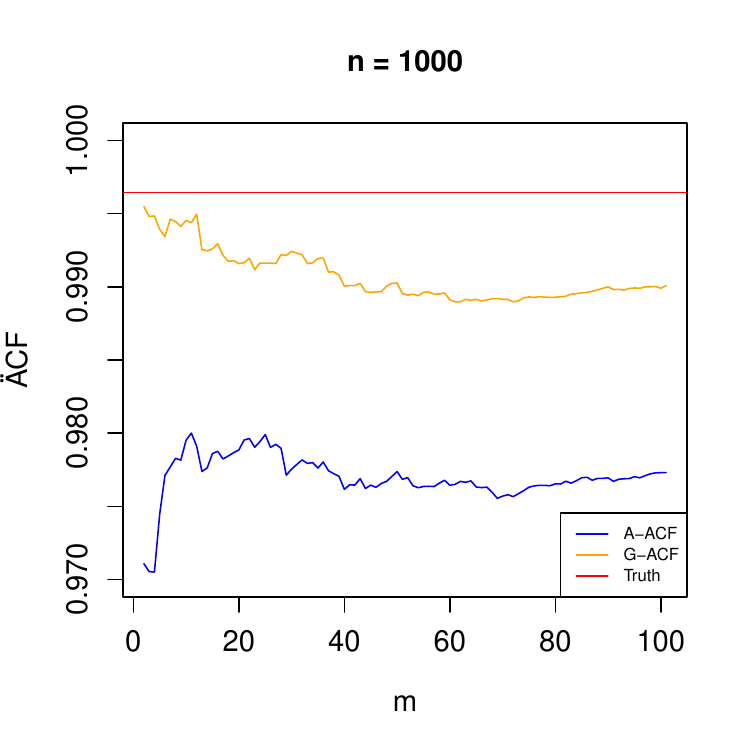}
   \includegraphics[width=.40\linewidth]{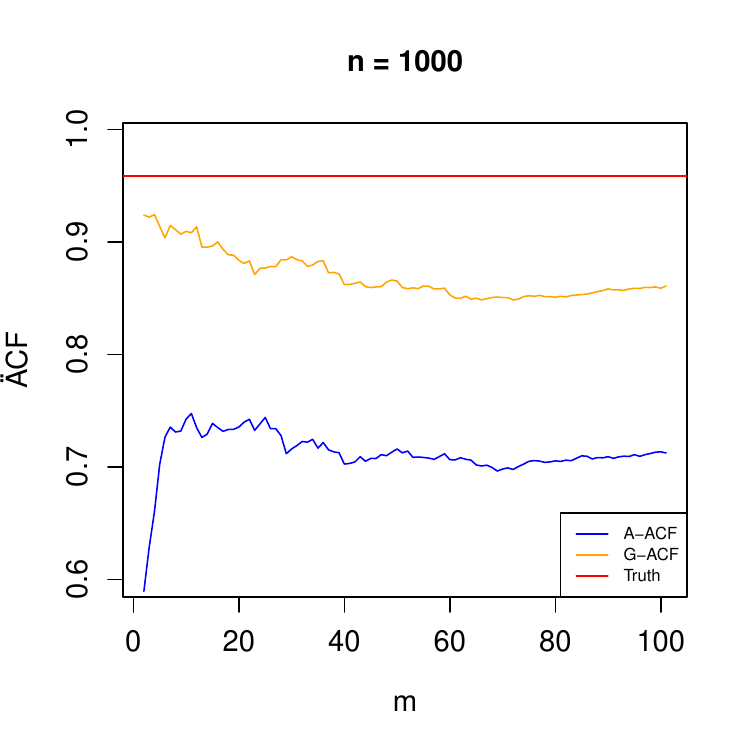}
  
    \includegraphics[width=.4\linewidth]{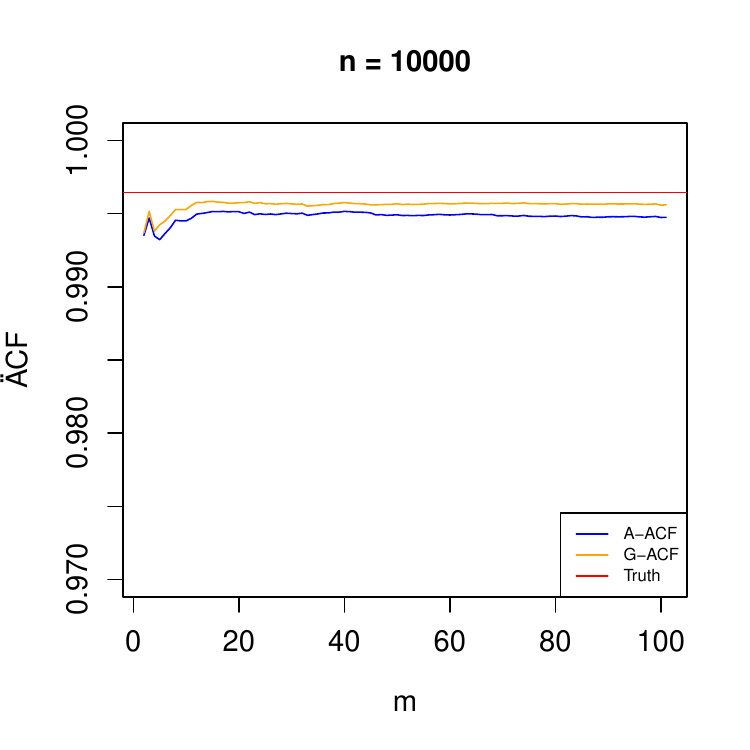}
  \includegraphics[width=.4\linewidth]{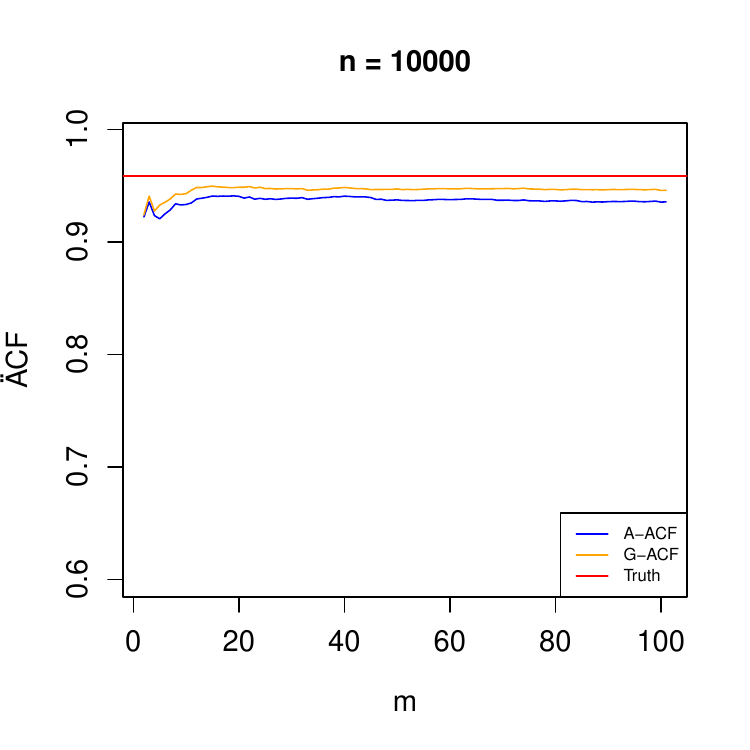}
\caption{VAR. Estimated ACF at lag 1 (left) and lag 40 (right) for the first component of the VAR, for increasing $m$. (Top) $n = 10^3$  and (bottom) $n =  10^4$. The red lines are the true autocorrelation.}
\label{fig:var-acf-mchains}
\end{figure}

Figure~\ref{fig:var-acf-mchains} presents these estimated autocorrelations for sample sizes $10^3$ and $10^4$. Naturally, as $m$ increases, the variability reduces. However, we note that the deviation from the truth remains more or less the same as a function of $m$. Globally-centered autocorrelations consistently perform better than locally-centered autocorrelations; the difference between them does not reduce  All chains here are started from stationarity.

\bibliographystyle{apalike}
\bibliography{sample}
\end{document}